\newcommand*\samethanks[1][\value{footnote}]{\footnotemark[#1]}
\title{The Price of Connectivity for Vertex Cover}
\author[1]{Eglantine Camby\thanks{\{ecamby, jcardin, sfiorini\}@ulb.ac.be}}
\author[2]{Jean Cardinal\samethanks}
\author[1]{Samuel Fiorini\samethanks}
\author[3]{\\ Oliver Schaudt\thanks{schaudt@math.jussieu.fr} \thanks{Parts of this research have been carried out during the visit of Oliver Schaudt to Universit\'e Libre de Bruxelles.}}
\affil[1]{\small Universit\'e Libre de Bruxelles (ULB), D\'epartement de Math\'ematique, CP~216, B-1050 Brussels, Belgium}
\affil[2]{\small Universit\'e Libre de Bruxelles (ULB), D\'epartement d'Informatique, CP~212,

B-1050 Brussels, Belgium}
\affil[3]{\small Universit\'e Pierre et Marie Curie, Combinatoire et Optimisation, 4 place Jussieu, 75252 Paris, France}
\date{}
\begin{document}

\newtheorem{observation}{Observation}
\newtheorem{theorem}{Theorem}
\newtheorem{lemma}{Lemma}
\newtheorem{corollary}{Corollary}
\newcommand{\PoC}{PoC}
\newcommand{\VC}{C}
\newcommand{\CVC}{C_c}
\newcommand{\CVCone}{C_{c,1}}
\newcommand{\CVCtwo}{C_{c,2}}
\newcommand{\poc}{\mathop{\textrm{PoC}}}

\maketitle

\begin{abstract}

The vertex cover number of a graph is the minimum number of vertices that are needed to cover all edges. When those vertices are further required to induce a connected subgraph, the corresponding number is called the connected vertex cover number, and is always greater or equal to the vertex cover number.

Connected vertex covers are found in many applications, and the relationship between those two graph invariants is therefore a natural question to investigate. For that purpose, we introduce the {\em Price of Connectivity}, defined as the ratio between the two vertex cover numbers. We prove that the price of connectivity is at most 2 for arbitrary graphs. We further consider graph classes in which the price of connectivity of every induced subgraph is bounded by some real number $t$. We obtain forbidden induced subgraph characterizations for every real value $t \leq 3/2$.

We also investigate critical graphs for this property, namely, graphs whose price of connectivity is strictly greater than that of any proper induced subgraph. Those are the only graphs that can appear in a forbidden subgraph characterization for the hereditary property of having a price of connectivity at most $t$. In particular, we completely characterize the critical graphs that are also chordal.

Finally, we also consider the question of computing the price of connectivity of a given graph. Unsurprisingly, the decision version of this question is NP-hard. In fact, we show that it is even complete for the class $\Theta_2^P = P^{NP[\log]}$, the class of decision problems that can be solved in polynomial time, provided we can make $O(\log n)$ queries to an NP-oracle. This paves the way for a thorough investigation of the complexity of problems involving ratios of graph invariants.    
\end{abstract}

\noindent \textbf{Keywords:} vertex cover, connected vertex cover, computational complexity, forbidden induced subgraphs.

\section{Introduction}

A \textit{vertex cover} of a graph $G$ is a vertex subset $\VC$ such that every edge of $G$ has at least one endpoint in $\VC$. 
The size of a minimum vertex cover of $G$, denoted by $\tau(G)$, is called the \textit{vertex cover number} of $G$. 
The problem of finding a minimum vertex cover in a graph is one of the 21 NP-hard problems identified by Karp in 1972, and has since been intensively studied in the literature.

A well-known variant of the notion of vertex cover is that of {\em connected vertex cover}, defined as a vertex cover $\CVC$ such that the induced subgraph $G[\CVC]$ is connected. (If $G$ is not connected we ask that $G[\CVC]$ has the same number of component as $G$.) The minimum size of such a set, denoted by $\tau_c(G)$, is the \textit{connected vertex cover number} of $G$. A connected vertex cover of size $\tau_c(G)$ is called a \textit{minimum connected vertex cover}.

Our contribution is to study the interdependence of $\tau$ and $\tau_c$, both from a complexity-theoretic point of view and in some hereditary classes of graphs.

Let us first note that every vertex cover $\VC$ of a connected graph $G$ such that $G[\VC]$ has $c$ connected components can be turned into a connected vertex cover of $G$ by adding at most $c-1$ vertices. This directly yields the following observation. 

\begin{observation}\label{obs basic inequality}
For every graph $G$ it holds that $\tau_c (G) \leqslant 2 \tau (G) - 1$.
\end{observation}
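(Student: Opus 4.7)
The plan is to spell out carefully the sketch provided in the paragraph just before the observation. Let $G$ be a connected graph (the disconnected case reduces to this one component by component) and let $\VC$ be a minimum vertex cover, so $|\VC| = \tau(G)$. Write $k$ for the number of connected components of the induced subgraph $G[\VC]$. Since each component contains at least one vertex, we immediately get the crucial bound $k \leq |\VC| = \tau(G)$.

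Next I would prove the key claim: starting from $\VC$, one can add at most $k-1$ vertices of $V \setminus \VC$ so as to obtain a connected vertex cover. I would argue this by an iterative/greedy procedure. Let $\VC'$ denote the current set (initially $\VC'= \VC$), and suppose $G[\VC']$ still has at least two components. I claim there is a vertex $v \in V \setminus \VC'$ that has neighbors in at least two distinct components of $G[\VC']$; adding such a $v$ to $\VC'$ strictly decreases the number of components, so after at most $k-1$ iterations $G[\VC']$ becomes connected.

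The main obstacle is establishing the above claim. To see it, note that $V \setminus \VC$ is an independent set (because $\VC$ is a vertex cover), hence so is the subset $V \setminus \VC'$. Since $G$ is connected, for any two components $D_1, D_2$ of $G[\VC']$ there is a path $u_0, u_1, \ldots, u_t$ from $D_1$ to $D_2$ in $G$. No two consecutive vertices of this path lie in $V\setminus \VC'$, so each $u_i\in V\setminus \VC'$ is flanked by two vertices of $\VC'$. Tracing the components the path visits, there must exist an index $i$ with $u_i \in V\setminus \VC'$ such that $u_{i-1}$ and $u_{i+1}$ lie in two different components of $G[\VC']$; this $u_i$ is the desired vertex $v$.

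Combining the two steps yields a connected vertex cover of size at most $|\VC| + (k-1) \leq \tau(G) + \tau(G) - 1 = 2\tau(G) - 1$, which establishes $\tau_c(G) \leq 2\tau(G)-1$. I expect the proof to be short, the only non-trivial point being the existence of the connector vertex $v$, which follows from the combination of connectivity of $G$ with the independence of $V\setminus \VC$.
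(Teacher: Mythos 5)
Your proposal is correct and follows exactly the route the paper indicates in the sentence preceding the observation: augment a minimum vertex cover $\VC$ by at most $c-1$ connector vertices (one per merging step), then bound $c$ by $|\VC|=\tau(G)$. The only part the paper leaves implicit — the existence, via connectivity of $G$ and independence of $V\setminus\VC$, of a vertex adjacent to two distinct components of the current cover — is precisely what you fill in, and your argument for it is sound.
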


As an immediate consequence of Observation \ref{obs basic inequality}, the following holds for every graph $G$ (with at least one edge):
\begin{equation} \label{eqn general poc}
1 \leqslant \tau_c (G) / \tau (G) < 2.
\end{equation}
We define the {\em Price of Connectivity} (\PoC) of a graph $G$ as the ratio $\tau_c(G) / \tau(G)$. Hence we just showed that the Price of Connectivity of any graph lies in the interval $[1,2)$.
We denote by $P_k$ the path on $k$ vertices and by $C_k$ the cycle on $k$ vertices. Note that the upper bound in (\ref{eqn general poc}) is asymptotically sharp in the class of paths and in the class of cycles, in the sense that
$$\lim\limits_{k \rightarrow \infty} \tau_c (P_k) / \tau (P_k) = 2 = \lim\limits_{k \rightarrow \infty} \tau_c (C_k) / \tau (C_k).$$

Our contribution is split into two parts. In the first part, we consider the computational complexity of the problem of deciding whether the \PoC{} of a graph given as input is bounded by some constant $t$. We show the completeness of this problem with respect to a well-defined complexity class in the polynomial hierarchy. In the second part, we investigate graph classes in which the \PoC{} {\em of every induced subgraph} is bounded by a constant $t$ with $t \in [1,2)$. Those classes will be defined by forbidden induced subgraphs. The forbidden subgraph characterizations directly yields polynomial-time algorithms for recognizing graphs in those classes.

We use the following standard notation. If $G$ and $H$ are two graphs we say that $G$ {\em contains} $H$ if $G$ has an induced subgraph isomorphic to $H$. We say that $G$ is \textit{$H$-free} if $G$ does not contain $H$. Furthermore, we say that $G$ is $(H_1,\ldots,H_\ell)$-free if $G$ does not contain $H_i$ for any $i \in \{1,\ldots,\ell\}$.


The Price of Connectivity (as defined here) has been introduced by Cardinal and Levy~\cite{cardinal-levy,levy}, who showed that it was bounded by $2/(1+\varepsilon)$ in graphs with average degree $\varepsilon n$, where $n$ denotes the number of vertices. Other ratios were previously studied. In a companion paper to the present paper, Camby and Schaudt~\cite{cambyschaudt} consider the Price of Connectivity for dominating set. 
Recently, Schaudt~\cite{schaudt} studied the ratio between the connected domination number and the total domination number. Fulman~\cite{fulman} and Zverovich~\cite{zverovich} investigated the ratio between the independence number and the upper domination number. 

\section{Our results}

All the proofs can be found in the next section.

\subsection{Computational Complexity}

The class $\Theta_2^P = P^{\mbox{{\scriptsize NP}}[\log]}$ is defined as the class of decision problems solvable in polynomial time by a deterministic Turing machine 
that is allowed use $\mathcal{O}(\log n)$ many queries to an NP-oracle, where $n$ is the size of the input.

\begin{theorem} 
\label{thm:T2C}
Let $1 < r < 2$ be a fixed rational number.
Given a connected graph $G$, the problem of deciding whether $\tau_c(G)/\tau(G) \leq r$ is $\Theta_2^p$-complete.
\end{theorem}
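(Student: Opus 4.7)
The proof plan splits into membership in $\Theta_2^P$ and $\Theta_2^P$-hardness.

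For membership, I would compute $\tau(G)$ and $\tau_c(G)$ separately by binary search with an NP oracle. The query ``is $\tau(G) \le k$?'' is in NP: guess and verify a vertex cover of size at most $k$. The query ``is $\tau_c(G) \le k$?'' is also in NP: additionally verify connectivity of the induced subgraph in polynomial time. Since both quantities are bounded by $n$, each binary search uses $O(\log n)$ oracle calls, for $O(\log n)$ total. Writing $r = a/b$ in lowest terms, the remaining test $\tau_c(G)/\tau(G) \le r$ becomes the polynomial-time arithmetic check $b \cdot \tau_c(G) \le a \cdot \tau(G)$.

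For hardness, I would reduce from a known $\Theta_2^P$-complete problem on vertex covers. A natural candidate is the VC-compare problem: given two graphs $G_1, G_2$, decide whether $\tau(G_1) \le \tau(G_2)$. This is $\Theta_2^P$-complete via Wagner's classical results on thresholds of optimization problems, and it has the same ``comparison'' flavour as the target. Given an instance $(G_1, G_2)$, the plan is to build a single connected graph $G'$ by taking disjoint copies of $G_1, G_2$ and gluing them to a padding gadget $\Pi_r$ tailored to the fixed rational $r$, chosen so that (i) $\tau(G')$ is an explicit affine function of $\tau(G_1)$ and $\tau(G_2)$ coming from the gadget parameters, (ii) $\tau_c(G')$ is a similar but different affine function, and (iii) the inequality $\tau_c(G')/\tau(G') \le r$ is equivalent to $\tau(G_1) \le \tau(G_2)$.

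A convenient family of gadgets is obtained from long paths and pendant edges attached to a few anchor vertices: their vertex cover and connected vertex cover numbers are easy to compute, and by varying lengths and multiplicities one gets enough degrees of freedom to tune the threshold to land precisely at $r$. Standard cut-vertex arguments should ensure that optimum (connected) covers of $G'$ decompose as the union of optimum (connected) covers of $G_1$, $G_2$ and of the gadget, so the affine formulas above hold with equality and not merely as upper bounds.

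The main obstacle is making this work uniformly for every rational $r \in (1,2)$, rather than for a specific value such as $3/2$. The gadget must be sensitive enough that shifting $\tau(G_1)-\tau(G_2)$ by $1$ moves $\tau_c(G')/\tau(G')$ across $r$, and at the same time robust enough that no unexpected global cover of $G'$ beats the decomposed one. Arranging both properties as a function of the numerator and denominator of $r$ is where most of the delicate counting will live.
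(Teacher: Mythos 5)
Your membership argument and your choice of source problem (comparing $\tau(G_1)$ with $\tau(G_2)$, which is indeed $\Theta_2^p$-complete; the paper cites Spakowski and Vogel for exactly this) match the paper, and the general shape of your reduction --- disjoint copies of the two input graphs plus a path-and-pendant padding gadget tuned to $r=r_1/r_2$ --- is also the paper's. But there is a genuine gap between your desiderata (i)--(ii) and your conclusion (iii). A single padding gadget glued to $G_1$ and $G_2$ only shifts the \emph{additive constants} of $\tau(G')$ and $\tau_c(G')$; it cannot change the coefficients with which $\tau(G_1)$ and $\tau(G_2)$ enter those quantities, and it does not by itself make $\tau_c(G')$ an affine function of $\tau(G_1),\tau(G_2)$ at all. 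If both invariants of $G'$ contain $\tau(G_1)+\tau(G_2)$ with comparable coefficients, then $\tau_c(G')\le r\,\tau(G')$ unwinds to a condition on a weighted combination of $\tau(G_1)$ and $\tau(G_2)$ that is not equivalent to $\tau(G_1)\le\tau(G_2)$. What you need --- and what your plan never supplies --- is a \emph{decoupling}: $\tau_c(G')$ must depend on one input graph only and $\tau(G')$ on the other only.

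The paper achieves this with two asymmetric per-graph gadgets before any padding is done. Lemma~\ref{lem:fixtauc} attaches a pendant path of length two to every vertex, which freezes the connected vertex cover number at the constant $2n$ while keeping $\tau = n+\tau(G)$; Lemma~\ref{lem:fixtau} is an incidence-type construction that freezes $\tau$ at $n+m+1$ while making reconnection cost exactly $\tau(H)$, so $\tau_c = n+m+1+\tau(H)$. Applying the first to $r_2$ copies of $G$ and the second to $r_1$ copies of $H$ (the copy counts are what put the multiplicative factors $r_2$ and $r_1$ in front of $\tau(G)$ and $\tau(H)$, so that $r$ factors out of the ratio) and joining the results places $r_2\tau(G)$ only in the denominator and $r_1\tau(H)$ only in the numerator. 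Your paths-and-pendants padding then corresponds to the paper's Steps~4--5, which merely align the additive constants so that the ratio crosses $r$ exactly when $\tau(H)-\tau(G)$ changes sign. Without an analogue of these two gadgets, step (iii) of your plan does not follow from (i) and (ii).
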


It is easy to see that the above decision problem belongs to $\Theta_2^p$, since both $\tau$ and $\tau_c$ can be computed using logarithmically many queries to an \textsc{NP}-oracle by binary search.
Thus, Theorem~\ref{thm:T2C} is a negative result: loosely speaking, it tells us that deciding whether the \PoC{} is bounded by some constant is as hard as computing both $\tau$ and $\tau_c$ explicitely.
And this remains true even if the constant is not part of the input.

Our reduction is from the decision problem whether for two given graphs $G$ and $H$ it holds that $\tau(G) \ge \tau(H)$, which is known to be $\Theta_2^p$-complete due to Spakowski and Vogel~\cite{SV00}.
It uses a gadgetry that allows us to compare $\tau$ and $\tau_c$ on a single graph.

\subsection{\PoC-Perfect Graphs}

As Theorem~\ref{thm:T2C} shows, the class of graphs where $\tau_c(G)/\tau(G) \leq r$ holds (for any fixed rational $r \in (1,2)$) is $\Theta_2^p$-complete to recognize.
However, if we restrict our attention to hereditary graph classes, we are able to derive the following results.
Note that our characterizations yield polynomial time recognition algorithms, since the list of forbidden induced subgraphs is finite in each case.

We first consider the hereditary class of graphs $G$ for which $\tau_c(G) = \tau(G)$, referred to as {\em \PoC-Perfect graphs}. 
A similar result had been found by Zverovich~\cite{Connected-Dominant} for dominating set. 
There, the corresponding class is that of $(P_5,C_5)$-free graphs.
\begin{theorem} \label{thm PoC1}
The following assertions are equivalent for every graph $G$~:
\begin{enumerate}[(i)]
 \item For every induced subgraph $H$ of $G$ it holds that $\tau_c (H) = \tau (H)$.
 \item $G$ is $(P_5,C_5,C_4)$-free.
 \item $G$ is chordal and $P_5$-free.
\end{enumerate}
\end{theorem}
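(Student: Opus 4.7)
For (i)$\Rightarrow$(ii) I would just compute: $(\tau,\tau_c)(P_5)=(2,3)$, $(\tau,\tau_c)(C_4)=(2,3)$, $(\tau,\tau_c)(C_5)=(3,4)$, so each of these three graphs has $\tau_c>\tau$, and thus its presence as an induced subgraph witnesses a failure of (i). For (ii)$\Leftrightarrow$(iii), the observation is that every $C_k$ with $k\geq 6$ contains an induced $P_5$, so being chordal and $P_5$-free coincides with forbidding $\{P_5,C_4,C_5\}$.

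The real content is (ii)$\Rightarrow$(i). Since the class is hereditary, it suffices to prove $\tau_c(G)=\tau(G)$ for a connected $(P_5,C_4,C_5)$-free graph $G$. I would pick a minimum vertex cover $C$ minimizing the number of components of $G[C]$, set $I := V(G)\setminus C$, and assume for contradiction that $G[C]$ has at least two components. Using the connectivity of $G$, some pair of components $C_i,C_j$ must sit at distance exactly $2$, witnessed by a length-two path $a-v-b$ with $a\in C_i$, $b\in C_j$, $v\in I$. From this witness I would run a case analysis on the neighborhood of $v$ inside $C_i\cup C_j$ and on the $I$-neighborhoods of $a$ and $b$; the goal in each case is to produce one of: (a) an induced $P_5$, $C_4$, or $C_5$, contradicting (ii); (b) a vertex cover strictly smaller than $C$, contradicting minimality of $|C|$; or (c) another minimum vertex cover whose induced subgraph has strictly fewer components, contradicting the choice of $C$.

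Three canonical arguments carry the load. If $v$ has a non-neighbor in each of $C_i,C_j$, walking the two components produces $a^{(2)}\sim a\sim v$ and $b\sim b^{(2)}$ with $a^{(2)},b^{(2)}\not\sim v$, and $a^{(2)}-a-v-b-b^{(2)}$ is an induced $P_5$. If $v$ dominates $C_i\cup C_j$ and both $a,b$ have $v$ as their only $I$-neighbor, then $(C\setminus\{a,b\})\cup\{v\}$ is a smaller cover. In the remaining subcases ($v$ dominates at least one side and, WLOG, $a$ has an extra $I$-neighbor $u\neq v$), the swap $(C\setminus\{b\})\cup\{v\}$ handles the case $N(b)\cap I=\{v\}$ (it merges $C_i$ with $C_j\setminus\{b\}$ via $v$), while the case $w\in N(b)\cap I\setminus\{v\}$ is dispatched by inspecting the five vertices $\{u,a,v,b,w\}$ to obtain $P_5$ or $C_4$ depending on the chords $a\sim w$ and $u\sim b$; the stray case where $u$ has a neighbor $c$ outside $C_i\cup C_j$ contributes either $C_4$ on $\{a,u,c,v\}$ (when $c\sim v$) or the induced $P_5$ given by $c-u-a-v-b$ (when $c\not\sim v$). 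I expect the main obstacle to be keeping the bookkeeping airtight in these last configurations: the proof implicitly leans on the chordal $C_4$-free consequence that any two non-adjacent $I$-vertices sharing two common $C$-neighbors force those neighbors to be adjacent, and it is this constraint that eliminates the otherwise stubborn residual configurations and lets the swap arguments land.
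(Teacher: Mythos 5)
Your easy directions and your overall plan for (ii)$\Rightarrow$(i) coincide with the paper's: minimize the number of components induced by a minimum vertex cover, use the distance-$2$ lemma to obtain a witness $a$--$v$--$b$, and case-analyse to produce a forbidden subgraph or a better cover. Your Case 1 and Case 2 are correct, and the inspection of $\{u,a,v,b,w\}$ is sound \emph{when $u$ exists}. But the case analysis has a genuine gap, and there is a quick sanity check that exposes it: your list of terminal outcomes only ever yields an induced $P_5$ or $C_4$, a strictly smaller cover, or an equal-size cover with fewer components --- never a $C_5$. Run the argument on $G=C_5$ itself: it is $(P_5,C_4)$-free with $\tau=3<4=\tau_c$ and every minimum vertex cover induces two components, so a complete case analysis must terminate by exhibiting an induced $C_5$, which yours cannot do.

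Concretely, the failure is the step ``WLOG $a$ has an extra $I$-neighbor $u\neq v$.'' After Case 1 you only know that $v$ dominates at least one of $C_i,C_j$; when it dominates exactly one of them, say $C_j\ni b$, the symmetry between $a$ and $b$ is broken: the swap $(C\setminus\{b\})\cup\{v\}$ merges components precisely because $v$ dominates $C_j$, so you may not rename $a\leftrightarrow b$ to move the extra $I$-neighbor to the $a$-side. In $C_5=v_1v_2v_3v_4v_5$ take $C=\{v_1,v_2,v_4\}$, $v=v_3$, $a=v_2$, $b=v_4$: then $v$ dominates $\{v_4\}$ but not $\{v_1,v_2\}$, $a$ has no $I$-neighbor besides $v$ (so $u$ does not exist), $b$ has the extra $I$-neighbor $w=v_5$, and swapping on the $a$-side leaves $\{v_1\}$ as a separate component, so no contradiction is reached. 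What is missing is an analysis of how $w$ attaches to the \emph{other} component $C_i$: if $w$ meets a neighbor of $v$ in $C_i$ you get a $C_4$; if $w$ meets a non-neighbor of $v$ in $C_i$ you get an induced cycle of length at least $5$ (this is the only place a $C_5$ can appear); and if $w$ misses $C_i$ entirely you get a $P_5$ through a non-neighbor of $v$ in $C_i$, or, when $v$ also dominates $C_i$, you rerun the swap on that side to produce a second external vertex $y$ and conclude with a $C_4$ or $P_5$. This is exactly the portion of the argument the paper carries out with its BFS-tree height discussion, and it is the part your proposal omits.
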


The above characterization tells us that the class of \PoC-Perfect graphs properly contains two well-known classes of graphs: split graphs and trivially perfect graphs (see \cite{BLS} for further reference on these classes).
Moreover, it gives rise to the following definition.

\subsection{\PoC-Near-Perfect Graphs}

%
Let $t \in [1,2)$. A graph $G$ is said to be {\em PoC-Near-Perfect} with threshold $t$ if every induced subgraph $H$ of $G$ satisfies $\tau_c(H) \leqslant t \cdot \tau(H)$. 
This defines a hereditary class of graphs for every choice of $t$. 
Theorem \ref{thm PoC1} gives a forbidden induced subgraphs characterization of this class for $t = 1$. 
Our second result gives such a characterization for $t = 4/3$. 

Note that $\tau_c (C_5) / \tau (C_5) = 4/3$ and $\tau_c (P_5) / \tau (P_5) = \tau_c (C_4) / \tau (C_4) = 3/2$. Hence any graph class that does not forbid either $C_5$ or
$P_5$ contains a graph $G$ such that $\tau_c (G) / \tau (G) = 4/3$. Therefore, the characterization of Theorem~\ref{thm PoC1} also holds for the class of graphs $G$ such that
every induced subgraph $H$ satisfies $\tau_c (H) \leqslant t\cdot \tau (H)$, for any $t\in [1, 4/3)$. We now turn our attention to $t = 4/3$, which is the next interesting threshold after $t = 1$.

\begin{theorem}\label{thm PoC 4/3}
The following assertions are equivalent for every graph $G$~:
\begin{enumerate}[(i)]
 \item For every induced subgraph $H$ of $G$ it holds that $\tau_c (H) \leqslant \frac{4}{3} \cdot \tau (H)$.
 \item $G$ is $(P_5,C_4)$-free.
\end{enumerate}
\end{theorem}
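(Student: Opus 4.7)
The forward direction $(i)\Rightarrow(ii)$ is immediate since $\tau_c(P_5)/\tau(P_5)=\tau_c(C_4)/\tau(C_4)=3/2>4/3$, so either induced obstruction already refutes $(i)$.

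For $(ii)\Rightarrow(i)$, since $(P_5,C_4)$-freeness is hereditary, it suffices to prove $\tau_c(G)\le\tfrac{4}{3}\tau(G)$ for every connected $(P_5,C_4)$-free graph $G$. Let $C^*$ be a minimum vertex cover, $I=V(G)\setminus C^*$, and $C_1,\dots,C_k$ the components of $G[C^*]$. The heart of the proof is the following \emph{universal-vertex claim}: whenever $k\ge 2$, there exists $u\in I$ with a neighbor in every $C_i$. Granting this, $C^*\cup\{u\}$ is a connected vertex cover of size $\tau(G)+1$, so the PoC is at most $(\tau(G)+1)/\tau(G)\le 4/3$ as soon as $\tau(G)\ge 3$.

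The universal-vertex claim would be proved in two steps. First, I would argue that any two components $C_i,C_j$ share a common neighbor in $I$: a shortest $C_i$-$C_j$ path of length $\ge 4$ would be an induced $P_5$, whereas a length-$3$ candidate $u_0 u_1 u_2 u_3$ forces its middle $C^*$-vertex into $C_i$ or $C_j$ (by the component structure of $G[C^*]$), producing a length-$2$ shortcut. Second, I would induct on $k$: the hypothesis supplies $v\in I$ hitting $C_1,\dots,C_{k-1}$, and if $v$ fails to hit $C_k$ I take a common neighbor $b\in I$ of $C_1$ and $C_k$ together with a $b$-neighbor $a_k\in C_k$. One then shows $v$ and $b$ share a neighbor $a\in C_1$ by examining a shortest $C_1$-internal path $x_0{-}\cdots{-}x_\ell$ between a $v$-neighbor and a $b$-neighbor: every chord of the induced path $v{-}x_0{-}\cdots{-}x_\ell{-}b$ either exhibits such a common neighbor or strictly decreases $\ell$, and the remaining small case $\ell=1$ is resolved by appending $a_k$ and invoking $P_5$-freeness on the extended five-vertex path $v{-}x_0{-}x_1{-}b{-}a_k$. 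Once $a$ is located, for each $i\in\{2,\dots,k-1\}$ the five vertices $\{v_i,v,a,b,a_k\}$ (with $v_i\in C_i$ a neighbor of $v$) have all their non-consecutive pairs forced to be non-edges by either component separation, $I$-independence, or $v\not\sim C_k$; hence $P_5$-freeness forces the edge $v_i b$, showing $b$ is adjacent to every $C_i$ and is the desired universal vertex.

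The small cases $\tau(G)\le 2$ must be handled separately. Only $\tau(G)=2$ is interesting: if the minimum cover $\{u,v\}$ is disconnected then $G$ is bipartite, connectedness yields a common neighbor $b\in I$ of $u$ and $v$, and if both $u$ and $v$ had a further private neighbor one would exhibit either an induced $C_4$ (two common neighbors) or the induced $P_5$ on $b''{-}u{-}b{-}v{-}b'$. Hence one of $u,v$ has only $b$ as a neighbor, so $\{u,b\}$ is a connected vertex cover of size $2$. The main obstacle is the inductive step of the universal-vertex claim: arranging the path-shortening inside a possibly large component $C_1$ so that every chord of the auxiliary induced path either produces the sought common neighbor or strictly reduces the internal distance $\ell$, with the boundary case $\ell=1$ requiring the extra attachment of $a_k\in C_k$ and a second appeal to $P_5$-freeness.
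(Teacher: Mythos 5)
Your proof is correct, but it takes a genuinely different route from the paper's. The paper proves $(ii)\Rightarrow(i)$ by induction on the number $k$ of components of a minimum vertex cover: it peels off three components at a time (using a separate lemma guaranteeing that the sets $P_{C}(S_i)$ of three components are not cutsets, so the remainder is again a connected graph with a smaller minimum cover), reconnects those three components to the recursively obtained connected cover with one extra vertex via a case analysis, and finishes with the mediant inequality $\frac{a+c}{b+d}\le\max(a/b,\,c/d)$; in particular the paper only gets an additive excess of order $k/3$ and never asserts a single vertex meeting all components. Your universal-vertex claim is strictly stronger, and I have checked that it holds and that your sketch is sound: any two components are at distance exactly $2$ (a length-$3$ path forces an edge inside the independent set $V\setminus C^*$, and length $\ge 4$ yields an induced $P_5$); minimizing the internal distance $\ell$ in $C_1$ over all choices of endpoints makes $v$--$x_0$--$\cdots$--$x_\ell$--$b$ induced, so $\ell\le 1$, and appending $a_k$ kills $\ell=1$ because all required non-edges ($v\not\sim a_k$ as $v\not\sim C_k$, $x_0\not\sim b$ by minimality, $x_j\not\sim a_k$ by component separation) do hold; finally in $v_i$--$v$--$a$--$b$--$a_k$ every non-consecutive pair except $v_ib$ is forced to be a non-edge, so $P_5$-freeness makes $b$ adjacent to every component. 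As to what each approach buys: yours yields the cleaner and stronger statement that $\tau_c(G)\le\tau(G)+1$ for every connected $P_5$-free graph---note that $C_4$-freeness is never used in the universal-vertex claim and enters only in the case $\tau=2$, exactly where $1+1/\tau>4/3$---so it isolates the role of each forbidden subgraph; the paper's peeling-off machinery is weaker here but is the one that scales to the $3/2$ threshold of Theorem~\ref{thm PoC 3/2}, where no single connecting vertex can be hoped for. One presentational caution: phrase the induction as running over sub-collections of components of one fixed cover in one fixed graph (so the hypothesis hands you $v$ inside $G$ itself), rather than over induced subgraphs obtained by deletion, which would require re-verifying that the cover and its components survive the deletion.
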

By Theorem~\ref{thm PoC 4/3}, $t = 3/2$ is the next interesting threshold after $t = 4/3$. Our third results states that the list of forbidden induced subgraphs for threshold $t = 3/2$ is $(C_6, P_7, \Delta_1, \Delta_2)$, where $\Delta_1$ is the 1-join of two $C_4$'s, and $\Delta_2$ is obtained from $\Delta_1$ by removing any edge incident to the vertex of degree 4 (see Fig.~\ref{Deltas}). 

\begin{theorem}\label{thm PoC 3/2}
The following assertions are equivalent for every graph $G$~:
\begin{enumerate}[(i)]
\item For every induced subgraph $H$ of $G$ it holds that $\tau_c (H) \leqslant \frac{3}{2} \cdot \tau (H)$.
\item $G$ is $(P_7,C_6, \Delta_1, \Delta_2)$-free.
\end{enumerate}
\end{theorem}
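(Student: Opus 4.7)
The forward direction (i) $\Rightarrow$ (ii) is a routine calculation. Each of the four forbidden graphs has $\tau = 3$ and $\tau_c = 5$, giving ratio $5/3 > 3/2$. For $P_7$ and $C_6$ this is seen by observing that any connected vertex cover must be a contiguous subpath, respectively a subarc, and that any $4$-vertex such subset misses an edge. For $\Delta_1$ and $\Delta_2$ the unique minimum vertex cover of size $3$ consists of the high-degree vertex together with the two ``apex'' vertices of the two $C_4$'s; these three vertices form an independent set, and two extra connectors are both necessary and sufficient to link them.

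For the backward direction (ii) $\Rightarrow$ (i), the class defined by the forbidden subgraphs is hereditary, so it suffices to prove $\tau_c(G) \leq \frac{3}{2}\tau(G)$ for every $(P_7, C_6, \Delta_1, \Delta_2)$-free connected graph $G$. The plan is to fix a minimum vertex cover $C$ of $G$ and set $I := V(G) \setminus C$. Write $K_1, \ldots, K_c$ for the connected components of $G[C]$. Since $G$ is connected and $I$ is independent, the auxiliary graph on $\{K_1, \ldots, K_c\}$ whose edges record pairs of components sharing a common $I$-neighbour is connected, and a spanning tree of it yields a set $T \subseteq I$ of size at most $c-1$ such that $G[C \cup T]$ is connected. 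Consequently $\tau_c(G) \leq \tau(G) + c - 1$, and the target inequality reduces to showing $c \leq \tau(G)/2 + 1$.

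If every component of $G[C]$ has at least two vertices we are already done, since then $c \leq |C|/2 = \tau(G)/2$. The real work concerns \emph{singleton} components, i.e.\ vertices of $C$ whose entire neighbourhood lies in $I$. Letting $s$ denote the number of singletons, the trivial count gives only $c \leq \tau(G)/2 + s/2$, leaving a gap of about $(s-2)/2$ connectors to recover. I would close that gap in two complementary ways: (a) an $I$-vertex adjacent to $k \geq 3$ components merges $k-1$ component-pairs at the price of a single connector, saving $k-2$ vertices in the spanning-tree count; and (b) one can locally swap a singleton $v \in C$ with one of its $I$-neighbours $u$ (when $u$'s other neighbours are already in $C$) to produce another minimum vertex cover with strictly fewer singletons. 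The central structural lemma to establish is then that in a $(P_7, C_6, \Delta_1, \Delta_2)$-free graph, whenever $s$ is large relative to $c$ and neither savings mechanism applies, the graph must contain one of the four forbidden induced subgraphs.

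The main obstacle is this last structural claim, and specifically the $\Delta_1$/$\Delta_2$ case. Because $C_4$ itself is not forbidden, a singleton is free to participate in an induced $C_4$ through $I$, and the argument must show that two such induced $C_4$'s arising from two distinct ``unsavable'' singletons are necessarily glued at a common $I$-vertex in precisely the configuration that yields $\Delta_1$ (shared vertex of degree $4$) or $\Delta_2$ (degree $3$ after one edge deletion), rather than a larger configuration with extra chords that would destroy the induced subgraph status. Precise control of the adjacencies between these two $C_4$'s, with repeated use of the $P_7$- and $C_6$-freeness to rule out bad chords and shortcuts, should be the technically most demanding step; by contrast, the parallel subcases producing $P_7$ (from two singletons connected only by a long alternating detour) and $C_6$ (from two singletons joined by two internally disjoint length-$3$ paths through $I$) should be comparatively routine.
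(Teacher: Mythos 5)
Your forward direction is fine (each of $P_7$, $C_6$, $\Delta_1$, $\Delta_2$ indeed has $\tau=3$ and $\tau_c=5$), but the backward direction is a plan rather than a proof, and the plan has concrete problems. First, your intermediate target $c \leq \tau(G)/2 + 1$ is false in the class: the spider obtained from a vertex $v$ by attaching three paths $v$--$a_i$--$b_i$ is $(P_7,C_6,\Delta_1,\Delta_2)$-free, its unique minimum vertex cover $\{a_1,a_2,a_3\}$ has $c=3$ singleton components, and $3 > 3/2+1$. You acknowledge this must be repaired by ``savings,'' but your mechanism (b) is misstated: if $v$ is a singleton of $C$ with two or more neighbours in $I$, then $(C\setminus\{v\})\cup\{u\}$ fails to cover the edges from $v$ to its other $I$-neighbours, so the swap only applies to pendant singletons (the relevant condition is on $N(v)$, not on $N(u)$). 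Second, and decisively, the entire content of the theorem is concentrated in your ``central structural lemma'' --- that when the savings do not add up, a forbidden induced subgraph must appear --- and you neither state it precisely nor prove it. That lemma is exactly the hard part, and it is not clear the global accounting even admits a clean formulation: for instance, an induced $C_7$ meeting three components of the cover is not itself forbidden, genuinely prevents connecting those components with a single extra vertex, and has to be handled by a separate argument showing that two connectors still keep the ratio below $3/2$ because the components involved are large enough.

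For comparison, the paper avoids global counting entirely. It inducts on the number $k$ of components of a minimum vertex cover chosen with fewest components: a non-cutset lemma (based on the observation that the ``private neighbourhood'' sets $P_C(S_i)$ form a connected auxiliary graph, which therefore has a non-cut vertex) lets it peel off two components $S_1,S_2$, apply induction to the rest to get a connected cover $C'$, and then reduce everything to showing that the three pieces $S_1,S_2,C'$ can be linked by a \emph{single} additional vertex unless one of $P_7$, $C_6$, $\Delta_1$, $\Delta_2$ appears; the mediant inequality $\frac{|S_1|+|S_2|+1+|C'|}{|S_1|+|S_2|+|\VC'|} \leq \max\bigl(\frac{|S_1|+|S_2|+1}{|S_1|+|S_2|}, \frac{|C'|}{|\VC'|}\bigr)$ then closes the induction, and a dedicated lemma handles the induced-$C_7$ configuration where two connectors are unavoidable. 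If you want to pursue your route, you would need to formulate and prove the savings lemma in full; as written, the proposal does not establish the implication (ii)~$\Rightarrow$~(i).
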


Since a chordal and $P_7$-free graph is $(C_6, P_7, \Delta_1, \Delta_2)$-free, we deduce the following corollary from Theorem~\ref{thm PoC 3/2}.

\begin{corollary}
If $G$ is a chordal, $P_7$-free graph then for every induced subgraph $H$ of $G$, it holds that $\tau_c(H) \leqslant 3/2 \cdot \tau(H).$
\end{corollary}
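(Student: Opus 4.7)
The plan is to reduce the corollary to a short structural check and then invoke Theorem~\ref{thm PoC 3/2}. More precisely, I would show that the hypothesis ``chordal and $P_7$-free'' already rules out each of the four forbidden induced subgraphs $P_7$, $C_6$, $\Delta_1$, $\Delta_2$, so that $G$ automatically belongs to the class characterized by assertion (ii) of Theorem~\ref{thm PoC 3/2}. Applying the equivalence with (i) then gives the desired inequality for every induced subgraph $H$ of $G$ (this hereditary form comes for free because chordality and $P_7$-freeness are both preserved under taking induced subgraphs, but one actually does not even need to invoke heredity separately: Theorem~\ref{thm PoC 3/2}(i) is already phrased over all induced subgraphs).

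The steps are as follows. First, $P_7$-freeness of $G$ is given, so $P_7$ is excluded by hypothesis. Second, since a chordal graph contains no induced cycle of length at least $4$, it is in particular $C_6$-free. Third, I claim that both $\Delta_1$ and $\Delta_2$ contain an induced $C_4$, which would show that chordality alone excludes them. For $\Delta_1$, which is the $1$-join of two $C_4$'s at a single vertex, each of the two $C_4$'s is still induced in $\Delta_1$ since the $1$-join introduces no edges between the two copies other than through the shared vertex. For $\Delta_2$, the removed edge is incident to the unique vertex of degree $4$ (the identification vertex) and therefore lies inside only one of the two $C_4$'s; the other $C_4$ is untouched and remains an induced $4$-cycle in $\Delta_2$.

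Combining these three observations shows that $G$ is $(P_7, C_6, \Delta_1, \Delta_2)$-free, so assertion (ii) of Theorem~\ref{thm PoC 3/2} is satisfied, and assertion (i) then gives $\tau_c(H) \leqslant \tfrac{3}{2}\,\tau(H)$ for every induced subgraph $H$ of $G$.

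There is no real obstacle here: the only point that requires a moment of verification is that removing an edge incident to the degree-$4$ vertex in $\Delta_1$ to form $\Delta_2$ leaves at least one induced $C_4$ intact, which follows immediately from the fact that the removed edge lies in only one of the two $4$-cycles. In particular, the corollary is a purely structural consequence of Theorem~\ref{thm PoC 3/2} and does not require revisiting the vertex cover arguments that underpin it.
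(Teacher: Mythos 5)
Your proposal is correct and follows exactly the paper's route: the paper justifies the corollary with the single observation that a chordal, $P_7$-free graph is $(C_6, P_7, \Delta_1, \Delta_2)$-free and then invokes Theorem~\ref{thm PoC 3/2}. You merely spell out the (correct) details that chordality excludes $C_6$ and, via the induced $C_4$ surviving in each of $\Delta_1$ and $\Delta_2$, excludes those two graphs as well.
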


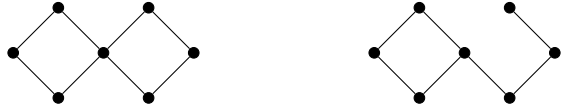
\begin{figure}[!h]
\begin{center}
\begin{tikzpicture}[scale = 0.6]
\tikzstyle{every node}=[draw,circle,fill=black,minimum size=4pt,inner sep=0pt]
\node(a) at (-6,0){};
\node(b) at (-5,1){};
\node(c) at (-5,-1){};
\node(d) at (-4,0){};
\node(e) at (-3,-1){};
\node(f) at (-2,0){};
\node(g) at (-3,1){};
\node(a1) at (6,0){};
\node(b1) at (5,1){};
\node(c1) at (5,-1){};
\node(d1) at (4,0){};
\node(e1) at (3,-1){};
\node(f1) at (2,0){};
\node(g1) at (3,1){};

\draw (d) -- (b) -- (a) -- (c) -- (d) -- (e) -- (f) -- (g) -- (d);
\draw (b1) -- (a1) -- (c1) -- (d1) -- (e1) -- (f1) -- (g1) -- (d1);
\end{tikzpicture}
\caption{An illustration of graphs $\Delta_1$ (on the left) and $\Delta_2$ (on the right).}
\label{Deltas}
\end{center}
\end{figure}

\subsection{\PoC-Critical Graphs}

We now turn our attention to {\em critical graphs}, that is, graphs $G$ for which the \PoC{} of any proper induced subgraph $H$ of $G$ is strictly smaller than the \PoC{} of $G$. 
These are exactly the graphs that can appear in a forbidden induced subgraphs characterization of the PoC-near-perfect graphs for some threshold $t \in [1,2)$. 
A perhaps more tractable class of graphs are the {\em strongly critical} graphs, defined as the graphs $G$ for which every proper (not necessarily induced) subgraph $H$ of $G$ has a \PoC{} that is strictly smaller than the \PoC{} of $G$. 
It is clear that every strongly critical graph is critical, but the converse is not true. 
For instance, $C_5$ is critical, but not strongly critical.

\subsubsection{\PoC-Critical Chordal Graphs}
Let $T$ be a tree.
We call $T$ \textit{special} if it is obtained from another tree by subdividing each edge exactly once and then attaching a pendant vertex to every leaf of the resulting graph (see Fig.~\ref{specialtree} for an example). 
%
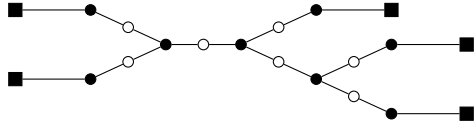
\begin{figure}[!h]
\centering
\begin{tikzpicture}[yscale = 0.23]
\tikzstyle{every node}=[draw,circle,fill=black,minimum size=4pt,inner sep=0pt]
\node (ab) at (0,4) [shape= circle, draw, fill = white] {};
\node (ae) at (-1,3) [shape= circle, draw, fill = white] {};
\node (af) at (-1,5) [shape= circle, draw, fill = white] {};
\node (bc) at (1,5) [shape= circle, draw, fill = white] {};
\node (bd) at (1,3) [shape= circle, draw, fill = white] {};
\node (dg) at (2,3) [shape= circle, draw, fill = white] {};
\node (dh) at (2,1) [shape= circle, draw, fill = white] {};
\node(a) at (-0.5,4){};
\node(b) at (0.5,4){};
\node(c) at (1.5,6){};
\node(d) at (1.5,2){};
\node(g) at (2.5,4){};
\node(h) at (2.5,0){};
\node(f) at (-1.5,6){};
\node(e) at (-1.5,2){};
\node (f1) at (-2.5,6) [rectangle, minimum height=0.5em,minimum
width=0.5em](f1){};
\node (e1) at (-2.5,2) [rectangle, minimum height=0.5em,minimum
width=0.5em](e1){};
\node (c1) at (2.5,6) [rectangle, minimum height=0.5em,minimum
width=0.5em](c1){};
\node (g1) at (3.5,4) [rectangle, minimum height=0.5em,minimum
width=0.5em](g1){};
\node (h1) at (3.5,0) [rectangle, minimum height=0.5em,minimum
width=0.5em](h1){};
\draw (f) -- (af) -- (a) -- (ae) -- (e); 
\draw (a) -- (ab) -- (b) -- (bc) -- (c); 
\draw (b) -- (bd) -- (d) -- (dg) -- (g);
\draw (d) -- (dh) -- (h); 
\draw (f) -- (f1);
\draw (e) -- (e1);
\draw (c) -- (c1);
\draw (g) -- (g1);
\draw (h) -- (h1);
\end{tikzpicture}
\caption{A special tree constructed from another tree (vertices indicated by filled circles) by sudividing each edge exactly once (subdivision vertices are indicated by hollow circles) and by attaching a pendant vertex (indicated by squares) to every leaf of the resulting graph.}
\label{specialtree}
\end{figure}

Our next result characterizes the class of (strongly) critical chordal graphs.
\begin{theorem} \label{thm chordal critical}
For a chordal graph $G$, the following assertions are equivalent~:
\begin{enumerate}[(i)]
 \item $G$ is a special tree.
 \item $G$ is strongly critical.
 \item $G$ is critical.
\end{enumerate}
\end{theorem}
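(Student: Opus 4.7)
My plan is to establish the equivalences via the cycle (ii)$\Rightarrow$(iii)$\Rightarrow$(i)$\Rightarrow$(ii). The first implication is immediate from the definitions, as every proper induced subgraph is in particular a proper subgraph.

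For (i)$\Rightarrow$(ii), let $S$ be a special tree built from a tree $T$ with $k=|V(T)|$. I would first pin down $\tau(S)=k$ and $\tau_c(S)=2k-1$, yielding $\poc(S)=2-1/k$. The cover $V(T)\subseteq V(S)$ witnesses $\tau(S)\leq k$; conversely, a matching of size $k$ can be built by rooting $T$ at an arbitrary leaf $r$, matching each non-root $v\in V(T)$ with the subdivision vertex on the edge from $v$ to its parent in $T$, and pairing $r$ with its pendant. The value $\tau_c(S)=2k-1$ follows from the standard fact that in a tree with at least three vertices the minimum connected vertex cover consists exactly of the internal (non-leaf) vertices; $S$ has $2k-1$ such vertices (the $k$ original $T$-vertices and the $k-1$ subdivision vertices). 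I would then verify that deleting any single edge or vertex of $S$ strictly reduces the PoC below $2-1/k$, via case analysis by type of element removed (pendant, $T$-leaf, subdivision vertex, $T$-internal vertex, pendant edge, or subdivision edge); in each case the residual forest decomposes into components whose PoC is strictly smaller by a short calculation.

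For (iii)$\Rightarrow$(i), let $G$ be critical chordal. First I show $G$ is connected: otherwise $G=G_1\sqcup G_2$ and $\poc(G)$ is the weighted average (with weights $\tau(G_i)$) of $\poc(G_1),\poc(G_2)$, so one component's PoC is at least $\poc(G)$, and deleting the other gives a proper induced subgraph contradicting criticality. Next I would show $G$ is a tree: since $\poc(G)>1$ and $G$ is chordal, Theorem~\ref{thm PoC1} forces an induced $P_5$ in $G$; assuming toward contradiction that $G$ contains a triangle, chordality yields a simplicial vertex $v$ with $|N(v)|\geq 2$, and a case analysis of the pair $(\tau(G)-\tau(G-v),\tau_c(G)-\tau_c(G-v))\in\{0,1\}^2$ shows the only case compatible with criticality is $(0,1)$; in that situation one uses the induced $P_5$ together with the clique structure around $v$ to locate a \emph{different} vertex (typically a non-simplicial triangle vertex along the $P_5$-tail) whose removal yields PoC $\geq\poc(G)$, again contradicting criticality. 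With $G$ established to be a tree, a final structural argument shows that each leaf $\lambda$ of $G$ must be attached to a degree-$2$ vertex $u$ whose other neighbor is again degree $2$ (otherwise $\tau_c(G-\lambda)=\tau_c(G)$, giving $\poc(G-\lambda)\geq\poc(G)$); iteratively stripping away such ``pendant--$T$-leaf--subdivision-vertex'' triples reconstructs an underlying tree $T$ so that $G$ is precisely the special tree arising from $T$.

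The main obstacle is the triangle-exclusion step. Simplicial-vertex removal alone does not suffice: as the ``diamond plus tail'' example shows, a simplicial-vertex deletion can give the PoC-decreasing case $(0,1)$ while a non-simplicial triangle vertex deletion yields the PoC-increasing contradiction. The argument must therefore combine the chordal clique structure with the long induced paths forced by $\poc(G)>1$ (and, if necessary, the stronger forbidden-subgraph characterizations of Theorems~\ref{thm PoC 4/3} and~\ref{thm PoC 3/2}) to guarantee the correct vertex whose removal witnesses non-criticality.
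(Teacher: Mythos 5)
Your overall architecture matches the paper's (the cycle (ii)$\Rightarrow$(iii)$\Rightarrow$(i)$\Rightarrow$(ii), with the work concentrated in forcing a critical chordal graph to be a tree and then a special tree), but the proposal has a genuine gap at exactly the step you flag: triangle exclusion. Choosing a \emph{simplicial} vertex is the wrong move, because a simplicial vertex need not lie in any minimum vertex cover, which is why you get stuck in the case $\tau(G)=\tau(G-v)$. The paper's fix is short and needs none of the machinery you invoke: since every vertex cover must contain at least two of the three vertices of any triangle, one can pick a triangle vertex $v$ lying in a \emph{minimum} vertex cover. Then $\tau(G-v)=\tau(G)-1$, and $\tau_c(G)\leqslant\tau_c(G-v)+1$ (a connected vertex cover of $G-v$ meets the surviving triangle edge, so adding $v$ preserves connectivity and covers all edges at $v$). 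Since $\tau_c\geqslant\tau$, this yields $\tau_c(G-v)/\tau(G-v)\geqslant \tau_c(G)/\tau(G)$, contradicting criticality --- no induced $P_5$, no case analysis on the pair of decrements, and no appeal to Theorems~\ref{thm PoC 4/3} or~\ref{thm PoC 3/2}. As written, your proof of (iii)$\Rightarrow$(i) is incomplete at its central step.

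Two further soft spots. First, in (i)$\Rightarrow$(ii) you only propose to check \emph{single} vertex or edge deletions, but strong criticality quantifies over all proper subgraphs, and $\poc$ is not monotone under deletion (e.g.\ $\poc(C_5)=4/3$ while removing one edge gives $P_5$ with $\poc=3/2$), so single-deletion checks do not suffice. The paper instead takes an inclusion-minimal proper subgraph $H$ with $\poc(H)\geqslant\poc(G)$, observes that $H$ is then critical, applies the already-proved (iii)$\Rightarrow$(i) to conclude $H$ is a special tree, and compares $2-1/\tau(H)<2-1/\tau(G)$. Second, your ``strip pendant--leaf--subdivision triples'' reconstruction is underspecified: the stated leaf condition (each leaf attached to a degree-$2$ vertex whose other neighbour also has degree $2$) is satisfied by $P_6$, which is neither special nor critical, so the condition cannot by itself drive the reconstruction. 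The paper derives the special-tree structure differently: every minimum vertex cover of a critical graph is independent (no bridge has both ends in it, Lemma~\ref{no bridge}), every vertex outside it has degree at most $2$, every vertex inside it has degree at least $2$, and no two pendant vertices share a neighbour; the underlying tree $T$ is then read off from the minimum vertex cover.
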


\subsubsection{\PoC-Strongly-Critical Graphs}
Our final  result yields structural constraints on the class of strongly critical graphs. 
\begin{theorem} \label{thm strongly critical structure}
Let $G$ be a strongly critical graph.
\begin{enumerate}[(i)]
 \item Every minimum vertex cover of $G$ is independent. In particular, $G$ is bipartite.
 \item If $G$ has a cutvertex, then $G$ is a special tree.
\end{enumerate}
\end{theorem}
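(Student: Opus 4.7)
The plan is to prove both parts by contradiction, in each case exhibiting a proper subgraph $H \subsetneq G$ whose Price of Connectivity is at least that of $G$.

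For part (i), suppose a minimum vertex cover $C^*$ of $G$ contains an edge $uv$; by minimality of $C^*$, both $u$ and $v$ have a neighbor outside $C^*$. I first analyze $H_0 := G - uv$. Since $C^*$ remains a vertex cover of $H_0$, we have $\tau(H_0) = \tau(G)$, and strong criticality then forces $\tau_c(H_0) < \tau_c(G)$. Extending any minimum connected vertex cover of $H_0$ to one of $G$ by adding at most one vertex (namely $u$ or $v$) yields $\tau_c(H_0) = \tau_c(G) - 1$. Crucially, every minimum connected vertex cover $D$ of $H_0$ must then avoid both $u$ and $v$, and hence $N_G(u) \setminus \{v\} \subseteq D$ and $N_G(v) \setminus \{u\} \subseteq D$. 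Next I pass to $H_1 := G - u$, where $\tau(H_1) = \tau(G) - 1$ and $D$ is still a connected vertex cover, so $\tau_c(H_1) \leq \tau_c(G) - 1$; applying strong criticality to $H_1$ together with integrality of cover sizes forces $\tau_c(H_1) \leq \tau_c(G) - 2$. Any minimum connected vertex cover $D'$ of $H_1$ must then be disjoint from $N_G(u)$ (else $D' \cup \{u\}$ would be a connected vertex cover of $G$ of size at most $\tau_c(G) - 1$), from which it follows that $D' \cup \{u, v\}$ is a minimum connected vertex cover of $G$ containing both $u$ and $v$. The final contradiction comes from repeating the edge-deletion analysis with $G - uw$ for a carefully chosen $w \in N_G(u) \setminus \{v\}$: the forced containment structure from $H_0$ (no minimum connected vertex cover avoids $\{u,v\}$) and the forced membership just derived collide.

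For part (ii), assume $G$ is strongly critical and contains a cutvertex. I first show that $G$ is a tree. Take an end block $B$ of $G$ with unique cutvertex $y$; if $|V(B)| \geq 3$, then $B$ contains a cycle, and decomposing $G$ across $y$ into the subgraph $G_1$ obtained from the rest of $G$ and the subgraph induced by $V(B)$, I exhibit an edge or vertex removal inside $V(B) \setminus \{y\}$ that preserves $\tau$ but decreases $\tau_c$ by only $1$; since $\tau_c(G)/\tau(G) > 1$ for any strongly critical $G$, this contradicts strong criticality. Hence every block of $G$ is a single edge and $G$ is a tree. To identify $G$ as a \emph{special} tree I use part~(i): every minimum vertex cover of $G$ is independent. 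Applying strong criticality to edge and vertex deletions near leaves and near degree-$2$ vertices then forces (a) every leaf has a neighbor of degree exactly $2$, and (b) between any two branching vertices, or between a branching vertex and a leaf's degree-$2$ neighbor, the path of degree-$2$ vertices has exactly the length prescribed by a single subdivision; this matches precisely the construction of a special tree. The main obstacle in (i) is closing the loop between the near-contradictory forcings on $H_0$ and $H_1$, which requires the carefully chosen third subgraph $G - uw$; in (ii) the main difficulty is that $\tau_c$ does not decompose additively across cutvertices, so ruling out $2$-connected end blocks requires careful tracking of how connected vertex covers of the two sides share $y$.
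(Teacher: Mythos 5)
Your part~(i) does not close. The chain of deductions up to the construction of a minimum connected vertex cover $D'\cup\{u,v\}$ of $G$ containing both $u$ and $v$ is essentially sound (modulo the connectivity of $G-uv$ and $G-u$, which you never address --- the paper needs a separate lemma showing that no bridge of $G$ has both endvertices in a minimum vertex cover), but the ``final contradiction'' via $G-uw$ is only asserted, and I do not see how it materializes: the $H_0$-analysis constrains minimum connected vertex covers of $G-uv$ (they avoid $u$ and $v$), the derived fact concerns a minimum connected vertex cover of $G$ (it contains $u$ and $v$), and the proposed third step would constrain covers of yet another graph $G-uw$; these statements live in different graphs and do not collide. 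The engine of the paper's proof is the idea your sketch is missing: for the endpoint $y$ of the offending edge, delete \emph{all} edges from $y$ to $N(y)\setminus C$ simultaneously. Then $C\setminus\{y\}$ is a vertex cover of the resulting graph $G'$, so $\tau(G')\le\tau(G)-1$, and strong criticality together with $\tau_c(G)/\tau(G)>1$ forces $\tau_c(G')\le\tau_c(G)-2$; re-inserting $y$ into a minimum connected vertex cover of $G'$ covers all deleted edges at the cost of one vertex, producing a connected vertex cover of $G$ of size at most $\tau_c(G)-1$, the desired contradiction. No single edge deletion can achieve the crucial strict drop in $\tau$, which is why your edge-then-vertex-then-edge chain keeps producing consistent constraints rather than a contradiction.

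In part~(ii), the step ruling out a nontrivial block is misstated: a removal that ``preserves $\tau$ but decreases $\tau_c$ by only $1$'' strictly decreases the ratio and is therefore perfectly consistent with strong criticality --- no contradiction follows. What you need is a proper subgraph whose $\tau_c$ does \emph{not} decrease. The paper obtains one by choosing a cutvertex $x$ lying in a nontrivial block and an edge $e$ of that block incident to $x$: the graph $G-e$ is connected, $x$ remains a cutvertex of $G-e$ and hence belongs to every connected vertex cover of $G-e$, so every such cover already covers $e$ and $\tau_c(G-e)\ge\tau_c(G)$, contradicting strong criticality. Once $G$ is known to be a tree (hence chordal), the paper simply invokes Theorem~\ref{thm chordal critical} rather than re-deriving the special-tree structure from scratch as you propose; your direct route could presumably be completed, but as written it is only a sketch and duplicates work already done elsewhere.
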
 

\section{Proofs}

\subsection{Complexity result}

We now proceed to prove Theorem~\ref{thm:T2C}. 

\begin{lemma}
\label{lem:fixtauc}
Given a connected graph $G$ with $n$ vertices, one can
construct in linear time a graph $G'$ 
such that $\tau(G')=n+\tau(G)$ and $\tau_c(G')=2n$.
\end{lemma}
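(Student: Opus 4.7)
The plan is to form $G'$ by attaching to every vertex $v$ of $G$ a pendant path of length two: two fresh vertices $a_v, b_v$ together with the edges $va_v$ and $a_vb_v$. The resulting graph has $3n$ vertices and $|E(G)|+2n$ edges, is constructed in linear time, and is connected because $G$ is connected.

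For the identity $\tau(G')=n+\tau(G)$, both bounds reduce to pendant-gadget bookkeeping. The upper bound is realised by $C\cup\{a_v:v\in V(G)\}$, where $C$ is a minimum vertex cover of $G$; this set has size $n+\tau(G)$ and covers every edge of $G$ via $C$ and every pendant edge via $a_v$. For the lower bound, let $C'$ be any vertex cover of $G'$. Each pendant edge $a_vb_v$ contributes at least one vertex from $\{a_v,b_v\}$ to $C'$, yielding $n$ vertices disjoint from $V(G)$. Furthermore, because no pendant vertex is incident to an edge of $G$, the restriction $C'\cap V(G)$ is a vertex cover of $G$ and has size at least $\tau(G)$. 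Summing these two disjoint contributions gives $|C'|\ge n+\tau(G)$.

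For the identity $\tau_c(G')=2n$, the set $V(G)\cup\{a_v:v\in V(G)\}$ is a connected vertex cover: it covers every edge of $G'$, and its induced subgraph is $G$ with a pendant $a_v$ attached to each original vertex, which is connected. For the lower bound, let $C_c$ be an arbitrary connected vertex cover. I first argue that $a_v\in C_c$ for every $v$. If not, then $a_vb_v$ is covered by $b_v$, whose only neighbour is $a_v$; hence $b_v$ forms a singleton component of $G'[C_c]$. Because $C_c$ must contain at least one vertex from each pendant pair $\{a_w,b_w\}$, it has at least $n\ge 2$ vertices and thus a vertex distinct from $b_v$, contradicting connectivity. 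I next argue that $v\in C_c$ for every $v\in V(G)$. Since $a_v\in C_c$ and its only neighbours in $G'$ are $v$ and $b_v$, if $v\notin C_c$ then the component of $a_v$ in $G'[C_c]$ is contained in $\{a_v,b_v\}$, which is disconnected from the vertices $a_w$ with $w\ne v$. Hence $V(G)\cup\{a_v:v\in V(G)\}\subseteq C_c$ and $|C_c|\ge 2n$.

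The main subtlety is the second half of the lower bound for $\tau_c(G')$, where the connectivity constraint must be leveraged twice to push every pendant middle vertex and then every original vertex into $C_c$; the rest is just counting. Note that the proof tacitly uses $n\ge 2$, which is harmless in the intended reduction since the interesting instances are non-trivial.
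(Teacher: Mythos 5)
Your construction is exactly the paper's (its $v',v''$ are your $a_v,b_v$), and your argument is a correct, fully detailed verification of the two equalities that the paper only sketches in two lines. Your remark that $n\geq 2$ is tacitly needed for $\tau_c(G')=2n$ (for $n=1$ the graph $G'$ is a $P_3$ with $\tau_c=1$) is accurate and is likewise glossed over in the paper.
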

\begin{proof}
With each vertex $v\in V(G)$, associate three vertices $v,v',v''$ in $V(G')$, and let $E(G'):=E(G)\cup \bigcup_{v\in V(G)} \{ vv', v'v'' \}$.
A minimum vertex cover of $G'$ is the union of a minimum vertex cover of $G$ with all vertices of the form $v'$. On the other hand, 
a minimum connected vertex cover of $G'$ contains all vertices $v, v'$. 
\end{proof}

\begin{figure}
\begin{center}
\includegraphics[width=.3\textwidth]{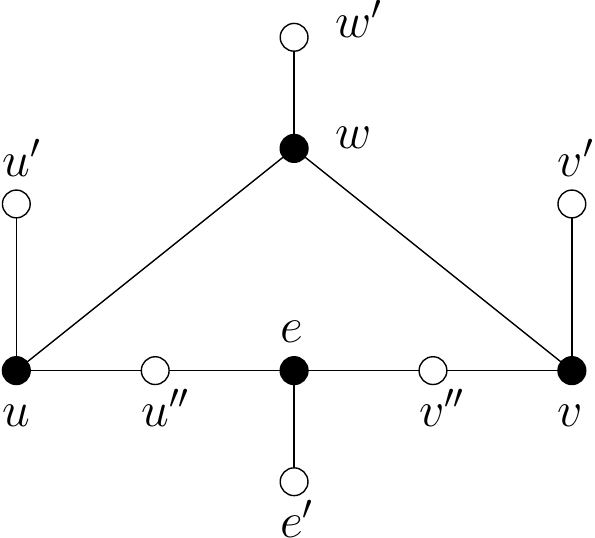}
\end{center}
\caption{\label{fig:edge}Representation of an edge $e=uv$ in the construction of $G'$ in Lemma~\ref{lem:fixtau}.}
\end{figure}

\begin{lemma}
\label{lem:fixtau}
Given a graph $G$ with $n$ vertices and $m$ edges, one can construct in linear time a graph $G'$ such that $\tau(G')=n+m+1$ and $\tau_c(G')=n+m+1+\tau(G)$.
\end{lemma}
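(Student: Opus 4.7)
The plan is to build $G'$ by attaching two local gadgets and one global hub to $G$. For every vertex $v \in V(G)$, I would add new vertices $v', v''$ together with edges $vv'$ and $v'v''$. For every edge $e = uv \in E(G)$, which I would delete from $G'$, I would add new vertices $s_e, t_e$ together with edges $us_e$, $s_ev$ and $s_et_e$; this is the gadget depicted in Figure~\ref{fig:edge}. Finally, I would add a hub vertex $z$ with a pendant $z'$ (edge $zz'$), and make $z$ adjacent to every $v'$. The construction is clearly linear in $|V(G)|+|E(G)|$.

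The candidate vertex cover $C_0 := \{v' : v \in V(G)\} \cup \{s_e : e \in E(G)\} \cup \{z\}$ has size $n+m+1$, and a quick inspection confirms that it meets every edge of $G'$. For the matching lower bound, the $P_3$'s at each $v$, the claws centred at each $s_e$, and the pendant edge $zz'$ all have pairwise disjoint edge sets, and each of these $n+m+1$ subgraphs requires at least one vertex in any vertex cover. Hence $\tau(G') = n+m+1$.

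The inequality $\tau_c(G') \leq n+m+1+\tau(G)$ is obtained by forming $C_0 \cup V_0$ for any minimum vertex cover $V_0$ of $G$ and checking connectivity of $G'[C_0 \cup V_0]$: the hub $z$ sees every $v'$, every $v \in V_0$ sees $v'$ and every subdivision vertex $s_e$ with $e \ni v$, and because $V_0$ is a vertex cover of $G$ each $s_e$ has an endpoint in $V_0$. The matching lower bound is the main obstacle. My plan is first to argue that every connected vertex cover $C$ of $G'$ must contain all of $C_0$: if, say, $v' \notin C$, then $v'' \in C$ to cover $v'v''$, but $v''$ has no other neighbour in $G'$ and would be isolated in $G'[C]$, contradicting connectivity; the same argument works for $s_e$ via its pendant $t_e$ and for $z$ via its pendant $z'$. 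The key design choice is then that $z$ has deliberately not been joined to any $s_e$, so each $s_e$ is isolated in $G'[C_0]$, and $C$ must contain $u$ or $v$ for each edge $e=uv$ of $G$ in order to reconnect $s_e$ to the rest. This forces $C \cap V(G)$ to be a vertex cover of $G$ of size at least $\tau(G)$, giving $|C| \geq n+m+1+\tau(G)$, as required.
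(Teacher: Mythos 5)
Your gadget is essentially the paper's with the roles permuted: the paper pendant-forces the original vertices, the edge-vertices and the hub into every minimum (connected) vertex cover and uses auxiliary vertices $v''$ as the connectors whose selection encodes a vertex cover of $G$, whereas you pendant-force $v'$, $s_e$ and $z$ and let the original vertices themselves be the connectors; the counting $n+m+1$ versus $n+m+1+\tau(G)$ and the connectivity argument are the same, and your treatment of the lower bound for $\tau_c(G')$ is if anything more explicit than the paper's. The only repair needed is in the lower bound $\tau(G')\geq n+m+1$: edge-disjointness of the $P_3$'s, the claws and $zz'$ does not by itself prevent one cover vertex from serving two of these subgraphs (a triangle decomposes into three edge-disjoint edges yet has vertex cover number $2$), so you should instead observe that the pendant edges $v'v''$, $s_et_e$ and $zz'$ are pairwise vertex-disjoint and hence form a matching of size $n+m+1$.
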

\begin{proof}
For each edge $e=uv\in E(G)$, define two vertices $e, e'$ of $V(G')$. For each vertex $v\in V(G)$, define three vertices
$v, v',v''$ of $V(G')$. Finally, add two vertices $w,w'$ to $V(G')$. The set of edges $E(G')$ is defined as follows. For each edge $e=uv\in E(G)$,
the vertices $e$ and $e'$ of $V(G')$ are adjacent, and vertex $e$ is adjacent to vertices $u''$ and $v''$. Similarly, for each vertex $v\in V(G)$, vertices $v$ and $v'$ of $V(G')$ are adjacent, and $v$ is adjacent to both $v''$ and $w$. Finally, $ww'\in E(G')$. The construction is illustrated in Figure~\ref{fig:edge}.

Since for each edge $e\in E(G)$, the corresponding vertex $e\in V(G')$ is adjacent to the degree-one vertex $e'$, it can be considered, without loss of generality, to be part of any minimum vertex cover of $G'$. 
The same remark holds for vertices $v\in V(G)$, and for the unique vertex $w$. 
Now the union $C\subset V(G')$ of those vertices is a vertex cover of $G'$, hence we have $\tau(G')=n+m+1$.

We now have to compute $\tau_c(G')$. The previous vertex cover $C$ is not connected, as $G'[C]$ has exactly $m+1$ connected components: one for each edge of $G$, and one induced by $w$ and the vertices $v\in V(G)$. 
To make it connected, we need to augment $C$ with the fewest possible additional vertices of the form $v''$ for $v\in V(G)$. 
Every such vertex $v''$ will link the component containing $v$ to every vertex $e\in E(G)$ of $G'$ such that $v\in e$. 
Hence the minimum number of additional vertices to add to $C$ is exactly the size $\tau(G)$ of a minimum vertex cover of $G$. 
Thus $\tau_c(G')=n+m+1+\tau(G)$, as claimed.     
\end{proof}

\begin{proof}[of Theorem~\ref{thm:T2C}]
Let $r = r_1 / r_2$ be a fixed rational number with $1<r<2$.
It is clear that the problem is in $\Theta_2^p$, so we proceed to the $\Theta_2^p$-hardness.
Let $G$ and $H$ be two graphs.
We reduce from the $\Theta_2^p$-complete decision problem, whether $\tau(G) \ge \tau(H)$ (see Spakowski and Vogel~\cite{SV00}).

We can assume that $G$ and $H$ are both connected.
Otherwise, we choose a vertex from each connected component of $G$ (resp.~$H$), add two new vertices $w$ and $w'$, and put an edge from $w$ to all chosen vertices and to $w'$.
Let $G'$ (resp.~$H'$) be the graph obtained from $G$ (resp.~$H$) by this procedure.
It is clear that $\tau(G') = \tau(G)+1$ and $\tau(H') = \tau(H)+1$.
Hence, $\tau(G) \ge \tau(H)$ if and only if $\tau(G') \ge \tau(H')$.
So we may assume that both $G$ and $H$ are connected.

The reduction consists of the following five steps.

\textbf{Step 1.}
Let $v$ be any vertex of $G$. 
Starting with $r_2$ disjoint copies of $G$, we connect all $r_2$ copies of $v$ to a new vertex $w$. 
We then attach a pendant vertex $w'$ to $w$. 
The graph obtained we denote by $G_{r_2}$.
Let $n_G=|V(G)|$.
Clearly, $\tau(G_{r_2})= r_2 \tau(G) + 1$ and $|V(G_{r_2})| = r_2 n_G + 2$.

Similarly we construct $H_{r_1}$ from $H$.
Let $n_H = |V(H)|$ and $m_H = |E(H)|$.
Clearly, $\tau(H_{r_1})= r_1 \tau(H) + 1$, $|V(H_{r_1})| = r_1 n_H + 2$, and $|E(H_{r_1})| = r_1 m_H + r_1 + 1$.

\textbf{Step 2.} 
We apply Lemma~\ref{lem:fixtauc} to $G_{r_2}$ to get $G_{r_2}'$.
We obtain
\begin{eqnarray*}
\tau(G_{r_2}') & = & |V(G_{r_2})| + \tau(G_{r_2}) \\
& = & r_2 \tau(G) + r_2 n_G + 3,\\
\tau_c(G_{r_2}') & = & 2 |V(G_{r_2})| \\
& = & 2 r_2 n_G + 4.
\end{eqnarray*}

We apply Lemma~\ref{lem:fixtau} to $H_{r_1}$ to get $H_{r_1}'$, and obtain
\begin{eqnarray*}
\tau (H_{r_1}') & = & |V(H_{r_1})| + |E(H_{r_1})| + 1 \\
& = & r_1 (n_H + m_H + 1) + 4,\\
\tau_c(H_{r_1}') & = & \tau(H_{r_1}) + |V(H_{r_1})| + |E(H_{r_1})| + 1 \\
& = & r_1 \tau(H) + r_1 (n_H + m_H + 1) + 5.
\end{eqnarray*}

\textbf{Step 3.} 
We construct a new graph $U$ by taking the disjoint union of $G_{r_2}'$ and $H_{r_1}'$, and adding an edge $uv$ such that $u\in V(G_{r_2}')$, $v\in V(H_{r_1}')$, 
and both $u$ and $v$ are adjacent to a degree-one vertex in $G_{r_2}'$ and $H_{r_1}'$, respectively (such an edge always exists). 

By construction of $U$,
\begin{eqnarray*}
\tau_c (U) & = & \tau_c(G_{r_2}') + \tau_c(H_{r_1}') \\
& = & r_1 \tau(H) + r_1 (n_H + m_H + 1) + 2 r_2 n_G + 9,\\
\tau (U) & = & \tau(G_{r_2}') + \tau(H_{r_1}') \\
& = & r_2 \tau(G) + r_1 (n_H + m_H + 1) + r_2 n_G + 7 .
\end{eqnarray*}

\textbf{Step 4.}
Let $\varphi_1 = 2 r_2 n_G + r_1 (n_H + m_H + 1) + 9$ and $\varphi_2 = r_2 n_G + r_1 (n_H + m_H + 1) + 7$.
In this step, we determine two non-negative integers $a$ and $b$ such that
\begin{equation}\label{eqn:PoC-a,b}
\frac{a+2b + \varphi_1}{a+b + \varphi_2} = r.
\end{equation}
We claim that the computation of the integers $a$ and $b$ can be done in polynomial time in $\varphi_1 + \varphi_2$ (that is, in the size of $U$) and, moreover, we can choose $a$ and $b$ such that $a,b  \in \mathcal{O}(\varphi_1 + \varphi_2)$.

To see this, consider the affine cone $C \subseteq \mathbb{R}^2$ defined by 
$$C = \{(\varphi_2,\varphi_1) + a (1,1) + b (1,2) : a , b \in \mathbb{R}_{\ge 0}\}$$
and the linear space $L$ defined by
$$L = \{\lambda (1,r) : \lambda \in \mathbb{R}_{\ge 0}\}.$$
Note that
$$\{(\varphi_2,\varphi_1) + a (1,1) + b (1,2) : a,b \in \mathbb{Z}_{\ge 0}\} = C \cap \mathbb{Z}^2.$$
Thus, to find $a$ and $b$, we have to compute an integral point in $C \cap L$.

Since $1 < r < 2$, there is a $\lambda_0$ such that $\lambda (1,r) \in C$ for all $\lambda \ge \lambda_0$.
We claim that we can choose $\lambda_0 \in \mathcal{O}(\varphi_1 + \varphi_2)$.
For such a $\lambda_0$ it holds that there is an integral point $(x,y) \in \{\lambda (1,r) : \lambda_0 \le \lambda \le \lambda_0 + r_1\} \subset C$ and cleary this point can be found in polynomial time in $\varphi_1+\varphi_2$.
Moreover, for the corresponding $a,b$ it holds that $a,b \in \mathcal{O}(\varphi_1+\varphi_2)$.

To see that we can choose $\lambda_0 \in \mathcal{O}(\varphi_1+\varphi_2)$, 
consider the two hyperplanes $H_1 = \{(\varphi_2,\varphi_1) + x (1,1) : x \in \mathbb{R}\}$ and $H_2 = \{(\varphi_2,\varphi_1) + x (1,2) : x \in \mathbb{R}\}.$
It is clear that $C \cap (H_1 \cup H_2)$ is the boundary of $C$.

Let $(x_1,y_1)$ be the unique point in $L \cap H_1$ and let $(x_2,y_2)$ be the unique point in $L \cap H_2$.
Since $C \cap (H_1 \cup H_2)$ is the boundary of the affine cone $C$, we can choose $\lambda_0 = \max\{x_1 , x_2\}$.
A straightforward computation shows that $x_1 = (\varphi_1-\varphi_2) /(r-1)$ and $x_2 = (2 \varphi_2 - \varphi_1) / (2-r)$.
This proves our claim.

\textbf{Step 5.}
We now construct a graph $U'$ from $U$ as follows.
Let $v$ be a vertex in $U$ of degree 1 (such a vertex is always present).
Let $P^1$ be the graph obtained from the chordless path with vertex set $\{u_1, u_2 , \ldots , u_a\}$ by attaching a pendant vertex to every member of $\{u_1, u_2 , \ldots , u_a\}$.
Similarly, let $P^2$ be the graph obtained from the chordless path with vertex set $\{v_1, v_2 , \ldots , v_{2b}\}$ by attaching a pendant vertex to every member of $\{v_2 , v_4 , \ldots , v_{2b}\}$. 
Let $U'$ be the graph obtained from the disjoint union of $U$, $P^1$, and $P^2$ by putting an edge from $v$ to $u_1$ and to $v_1$.
Since $a,b  \in \mathcal{O}(\varphi_1 + \varphi_2)$, the above procedure can be done in linear time in the size of the graph $U$.

By the construction of $U'$, we obtain
\begin{eqnarray*}
\tau_c (U') & = & \tau_c (U) + a + 2b \\
& = & r_1 \tau(H) + a+2b + \varphi_1,\\
\tau (U') & = & \tau(U) + a + b \\
& = & r_2 \tau(G) + a+b + \varphi_2.
\end{eqnarray*}
Recall that $r = r_1 / r_2$. 
By (\ref{eqn:PoC-a,b}), there is some non-negative integer $c$ such that $a+2b + \varphi_1 = r_1 c$ and $a+b + \varphi_2 = r_2 c$.
Hence,
\[
\frac{\tau_c(U')}{\tau(U')} = \frac{r_1 \tau(H) + a+2b + \varphi_1}{r_2 \tau(G) + a+b + \varphi_2}
= \frac{r_1 \tau(H) + r_1 c}{r_2 \tau(G) + r_2 c} = r \frac{\tau(H) + c}{\tau(G) + c}.
\]
Thus, $\tau_c(U')/\tau(U') \le r$ if and only if $\tau(H) \le \tau(G)$.
This completes the proof. 
\end{proof}

\subsection{Structural results}

\begin{lemma} \label{dist2}
Let $G$ be a connected graph and let $\VC$ be a vertex cover of $G$. 
If $(\mathcal{A}, \mathcal{B})$ is a bipartition of the connected components of $\VC$ with $\mathcal{A}, \mathcal{B} \neq \emptyset$, 
there exists $A \in \mathcal{A}$ and $B \in \mathcal{B}$ such that the distance between $A$ and $B$ is exactly $2$.
\end{lemma}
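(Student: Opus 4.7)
The plan is to exploit the fact that $V(G) \setminus C$ is independent (since $C$ is a vertex cover), so that any path in $G$ cannot have two consecutive vertices outside $C$, and then to use a discrete intermediate value argument along a path that connects the two parts of the bipartition.

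First I would fix an arbitrary component $A_0 \in \mathcal{A}$ and an arbitrary component $B_0 \in \mathcal{B}$. Since $G$ is connected, there exists a path $P = a_0, x_1, x_2, \ldots, x_{k-1}, b_0$ in $G$ from some $a_0 \in A_0$ to some $b_0 \in B_0$. Let $c_0 = a_0, c_1, \ldots, c_\ell = b_0$ be the subsequence of vertices of $P$ that lie in $C$ (the endpoints certainly qualify). Because $V(G) \setminus C$ is independent, every vertex of $P$ that is not in $C$ is sandwiched between two vertices of $C$ on $P$; in particular, any two consecutive members of this subsequence, $c_i$ and $c_{i+1}$, are at distance either $1$ or $2$ along $P$ (with an intermediate vertex $v \in V(G) \setminus C$ in the distance-$2$ case).

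Next, I would assign to each $c_i$ the component of $G[C]$ to which it belongs, and classify each according to whether that component is in $\mathcal{A}$ or $\mathcal{B}$. The first vertex $c_0$ is in an $\mathcal{A}$-component and the last vertex $c_\ell$ is in a $\mathcal{B}$-component, so there must be an index $i$ for which $c_i$ lies in some $A' \in \mathcal{A}$ and $c_{i+1}$ lies in some $B' \in \mathcal{B}$. Since $A'$ and $B'$ are distinct components of $G[C]$, $c_i$ and $c_{i+1}$ cannot be adjacent in $G$ (such an edge would merge the two components). Hence their distance along $P$ is not $1$, which leaves distance exactly $2$: there is a vertex $v \in V(G) \setminus C$ adjacent to both $c_i$ and $c_{i+1}$.

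Finally I would conclude by noting that this $v$ witnesses $d_G(A',B') \leq 2$, and conversely $d_G(A',B') \geq 2$ because $A'$ and $B'$ are disjoint and non-adjacent components of $G[C]$. Therefore $d_G(A',B') = 2$, as required.

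I do not foresee a real obstacle here: the only delicate point is making sure that a transition from the $\mathcal{A}$-side to the $\mathcal{B}$-side actually occurs between two consecutive $C$-vertices of $P$ (rather than, say, at a non-$C$ vertex, which is not in any component of $G[C]$). The discrete intermediate value argument on the subsequence $c_0, c_1, \ldots, c_\ell$ handles this cleanly.
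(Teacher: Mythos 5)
Your proof is correct. It rests on the same basic fact as the paper's proof, namely that $V(G)\setminus \VC$ is independent so no two consecutive vertices of a path can both lie outside $\VC$, but the two arguments are organized differently. The paper makes an extremal choice: it picks $A\in\mathcal{A}$ and $B\in\mathcal{B}$ at minimum distance, supposes that distance is at least $3$, argues that no interior vertex of a shortest $A$--$B$ path can lie in $\VC$ (else a closer pair of components would exist), and then exhibits an uncovered edge as a contradiction. You instead take an arbitrary path from the $\mathcal{A}$-side to the $\mathcal{B}$-side, restrict attention to its $\VC$-vertices (consecutive ones being at path-distance $1$ or $2$), and locate by a discrete intermediate-value argument a consecutive pair whose components lie on opposite sides of the bipartition; since distinct components are non-adjacent, that pair must be at distance exactly $2$ through a common neighbour outside $\VC$. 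Your version is constructive rather than by contradiction, avoids the extremal choice and the (slightly delicate) claim that interior vertices of the shortest path avoid $\VC$, and correctly flags and handles the one subtle point, namely that the $\mathcal{A}$-to-$\mathcal{B}$ transition must be witnessed by two $\VC$-vertices of the path rather than by a vertex outside $\VC$. Both arguments are equally elementary and yield the same conclusion.
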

\begin{proof}
Let $(\mathcal{A},\mathcal{B})$ be a bipartition of the connected components of $\VC$. 
Since $\VC$ has a finite number of connected components, there exist $A \in \mathcal{A}$ and $B \in \mathcal{B}$ such that the distance between them is minimum. 
Now we show that this distance is $2$. 
Otherwise, let $x_1x_2 \dots x_n$ be a shortest path between $A$ and $B$ with $x_1 \in A$ and $x_n \in B$, where $n \geqslant 4$. 
In this case, no $x_i$, $i = 2, \dots, n-1,$ belongs to $\VC$. 
Otherwise, $B$ is not one nearest component of $\mathcal{B}$ from $A$ or $A$ is not one nearest component of $\mathcal{A}$ from $B$. 
Thus, the edge $x_2x_3$ is not covered by $\VC$, in contradiction with the definition of vertex cover. 
\end{proof}

\subsection{PoC-Perfect graphs}

\begin{proof}[Theorem~\ref{thm PoC1}]
The class of graphs that are chordal and do not contain an induced $P_5$ is exactly the class of $(C_4,C_5,P_5)$-free graphs.
Since $\tau_c (C_4)/\tau(C_4)=\tau_c(P_5)/\tau(P_5) = 3/2$, and $\tau_c (C_5)/\tau(C_5)=4/3$, any graph that contains $C_4$, $C_5$, or $P_5$ as an induced subgraph
does not satisfy the first property. Hence it remains to show that every graph that does not satisfy the first
property contains either a $C_4$, a $C_5$, or a $P_5$ as induced subgraph. 

Consider a connected graph $G=(V,E)$. Every minimum vertex cover of which induces at least two connected components. 
Pick such a minimum vertex cover $\VC\subset V$ that induces the smallest number of connected components. 
There must exist two subsets $A,B\subseteq \VC$ inducing two disjoint connected components, and a vertex $v$, such that $G[A\cup B\cup \{v\}]$ is connected, by Lemma~\ref{dist2}.

Consider the breadth-first search (BFS) trees $T_A\subseteq E$ in $G[\{v\}\cup A]$, and $T_B\subseteq E$ in $G[\{v\}\cup B]$, both rooted at $v$. If both trees
have height at least two, then there is an induced $P_5$. Hence at least one of the trees, say $T_B$, has height one, that is, $N(v)\cap B = B$. 
Now we consider the set $\VC' := (\VC\setminus \{w\})\cup \{v\}$, where $w$ is an arbitrary vertex of $B$. Since the number of connected components in $G[\VC']$ is strictly less than the number of connected components in $G[\VC]$, and $\VC'$ is not bigger than $\VC$, the new set $\VC'$ cannot be a vertex cover. Therefore, there must exist a vertex $x\notin \VC$, such that $wx\in E$ 
is not covered by $\VC'$. Note that $xv\notin E$ (otherwise it would be covered by $\VC'$). If $x$ is adjacent to a vertex $t\in A$ that is itself adjacent to $v$, then we have found a $C_4$. If $x$ is adjacent to a vertex $t\in A$ that is not adjacent to $v$, then, using the shortest path from $v$ to $t$ in $T_A$, we find a cycle of length at least 5. 

Hence there remains the case where $x$ is not adjacent to any vertex in $A$. In that case, provided the height of $T_A$ is at least two, we can find a $P_5$. If the height of $T_A$ is exactly one, then $N(v)\cap A=A$, and we can do the same reasoning as above, and show there is a vertex $y\notin \VC$ adjacent to a vertex $z\in A$. Similarly, we can assume that $y$ is not adjacent to any vertex in $B$. Now if $x=y$, we have found a $C_4$. Otherwise, the path going from $x$ to $y$ through $A,v$, and $B$ is an induced $P_5$. 
\end{proof}

\subsection{PoC-Near-Perfect graphs}

Let $\VC$ be a vertex cover of a graph $G$.
Let $\VC'$ be the vertex set of a connected component of $G[\VC]$.
We define $P_{\VC}(\VC')$ to be the set of vertices $v \in V(G)$ such that $N(v) \cap \VC \subseteq \VC'$.
It is clear that $\VC' \subseteq P_{\VC}(\VC')$.

To prove Theorem \ref{thm PoC 4/3}, we need to use the following lemma.

\begin{lemma}\label{simple-argumentation}
Let $S_1, S_2, \dots, S_k$ be the vertex sets of connected components of a vertex cover $\VC$.
There exists at least one $P_{\VC}(S_i)$which is not a cutset of $G$, i.e. $G[V(G) \setminus P_{\VC}(S_i)]$ is always connected.
\end{lemma}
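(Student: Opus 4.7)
The plan is to reduce the connectivity of $G[V(G) \setminus P_{\VC}(S_i)]$ to connectivity of an auxiliary graph defined on the components $S_1, \ldots, S_k$, where the existence of a non-cut vertex is a classical fact.

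I would first introduce an auxiliary graph $T$ with vertex set $\{S_1, \ldots, S_k\}$, where two components $S_a$ and $S_b$ are adjacent in $T$ whenever some vertex of $V(G) \setminus \VC$ has a neighbor in $S_a$ and a neighbor in $S_b$. Since $V(G) \setminus \VC$ is independent (as the complement of a vertex cover) and $G$ is connected, every path in $G$ joining two components of $G[\VC]$ must traverse a vertex of $V(G) \setminus \VC$ realizing an edge of $T$; hence $T$ is connected. The case $k=1$ is immediate since then $P_{\VC}(S_1) = V(G)$ and the subgraph in question is empty, hence vacuously connected, so I assume $k \geq 2$.

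Next I would invoke the classical fact that any connected graph on at least two vertices has at least two non-cut vertices (for instance, the endpoints of a longest path). Choose an index $i$ such that $S_i$ is a non-cut vertex of $T$, so that $T - S_i$ is connected.

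The core of the argument is to translate paths of $T - S_i$ into walks in $G[V(G) \setminus P_{\VC}(S_i)]$. Pick any $x, y \in V(G) \setminus P_{\VC}(S_i)$. To each I would associate some component $S_{j_x}, S_{j_y} \neq S_i$: if the vertex lies in $\VC$ it lies in its own component (which must differ from $S_i$); otherwise, it has at least one $\VC$-neighbor outside $S_i$, for otherwise it would lie in $P_{\VC}(S_i)$. In either case, the vertex is adjacent in the subgraph to a vertex of its associated component. Using a path $S_{j_x} = S_{k_0}, S_{k_1}, \ldots, S_{k_\ell} = S_{j_y}$ in $T - S_i$ and choosing, for each consecutive pair, a witness vertex $v_m \in V(G) \setminus \VC$ with neighbors in $S_{k_{m-1}}$ and $S_{k_m}$, each $v_m$ has $\VC$-neighbors in two components both distinct from $S_i$, so $N(v_m) \cap \VC \not\subseteq S_i$ and hence $v_m \notin P_{\VC}(S_i)$. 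Concatenating these pieces yields a walk from $x$ to $y$ inside $G[V(G) \setminus P_{\VC}(S_i)]$.

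The main subtlety I anticipate is verifying that the witness vertices along the translated path survive in $G \setminus P_{\VC}(S_i)$, but this is built into the definition of $T$: each edge of $T - S_i$ is realized by a vertex whose neighborhood meets two components different from $S_i$, hence is not contained in $S_i$.
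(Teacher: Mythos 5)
Your proof is correct and follows essentially the same route as the paper's: build an auxiliary graph whose vertices are the components $S_1,\dots,S_k$ (the paper uses the sets $P_{\VC}(S_i)$ instead, with adjacency given by a common neighbour), observe it is connected because $\VC$ is a vertex cover, and invoke the fact that every connected graph has a non-cutvertex. The only difference is that you explicitly carry out the translation of paths in the auxiliary graph back into walks of $G[V(G)\setminus P_{\VC}(S_i)]$, a verification the paper's proof leaves implicit.
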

\begin{proof}
We consider the new following graph $H$ defined by $$V(H) = \{ P_{\VC}(S_i) | i = 1, \dots, k \}$$ and $$E(H) = \{ P_{\VC}(S_i)P_{\VC}(S_j) | N(P_{\VC}(S_i)) \cap N(P_{\VC}(S_j)) \neq \emptyset  \}.$$
Note that the sets $P_{\VC}(S_i)$, $1 \leqslant i \leqslant k$, are disjoint and induce a connected subgraph of $G$ each. Because $\VC$ is a vertex cover, $H$ is connected. Because every connected graph contains a no cutvertex, there exists at least one $P_{\VC}(S_i)$ which is not a cutvertex of $H$. Therefore, $P_{\VC}(S_i)$ is not a cutset of $G$. 
\end{proof}

\begin{proof}[Theorem \ref{thm PoC 4/3}]
Since the \PoC{} of $P_5$ and $C_4$ equals $3/2$, any graph that contains $C_4$ or $P_5$ as an induced subgraph does not satisfy the first property. 
Hence, it remains to show that every graph that does not satisfy the first property contains either a $C_4$ or a $P_5$ as induced subgraph. 

Let $G$ be a $(P_5, C_4)$-free graph. 
The proof is by induction on the number of components of a minimum vertex cover, say $k$. 
Let $\VC$ be such a vertex cover of $G$.
Let $S_1,S_2, \dots, S_k$ be the vertex set of the connected components of $G[\VC]$.

If $\VC$ is connected ($k=1$), then $\tau_c/\tau = 1$. 

If $k = 2$, i.e. $S_1$ and $S_2$ are connected components of $G[\VC]$, we have a vertex $x$ adjacent to $S_1$ and $S_2$, by Lemma~\ref{dist2}. Let $s_1 \in S_1$ and $s_2 \in S_2$ be two vertices such that the distance between $s_i$ and $x$ in $S_i$ is maximum. In particular, $S_1 \cup \{x\} \cup (S_2 \setminus \{s_2\})$ and $S_2 \cup \{x\} \cup (S_1 \setminus \{s_1\})$ are connected. If $S_1 \cup \{x\} \cup (S_2 \setminus \{s_2\})$ or $S_2 \cup \{x\} \cup (S_1 \setminus \{s_1\})$ is a vertex cover, then $\tau_c(G)/\tau(G) = 1$. 
Otherwise there are two edges $x_1s_1$ and $x_2s_2$  with $x_1, x_2 \notin \VC \cup \{x\}$.
If $x_1 = x_2$, $G[ \VC \cup \{x, x_1\}]$ has an induced $C_4$, if the distance between $s_i$ and $x$, for $i=1,2$, is exactly 1, and an induced $P_5$, if the distance between $s_i$ and $x$, for $i=1,2$, is at least two. Without loss of generality, we can suppose that the distance between $s_1$ and $x$ is one and the distance between $s_2$ and $x$ is two, i.e. $\tau(G) \geqslant 3$. Therefore, $$\frac{\tau_c(G)}{\tau(G)} \leqslant \frac{\tau(G)+1}{\tau(G)} = 1 + \frac{1}{\tau(G)} \leqslant \frac{4}{3}.$$
Otherwise, $x_1$ and $x_2$ are different. Moreover $x_1$ cannot be adjacent to $x_2$ because $\VC$ is a vertex cover. 
Hence, $G[\VC \cup \{x, x_1, x_2\}]$ contains a $P_5$. 
Thus, we obtain a contradiction in every case.

If $S_1$, $S_2$ and $S_3$ are connected components of $G[\VC]$, we can suppose, without loss of generality, that there exists $x_1 \in N(S_1) \cap N(S_2)$ and $x_2 \in N(S_2) \cap N(S_3)$, by Lemma~\ref{dist2}.
The vertex $x_2$ is adjacent to $S_1$ (or $x_1$ is adjacent to $S_3$), otherwise $G[\VC \cup \{x_1,x_2\} ]$ contains a $P_5$.
Thus, $\VC \cup \{x_2\}$, resp. $\VC \cup \{x_1\}$, is a connected vertex cover. 
Hence, $$\frac{\tau_c(G)}{\tau(G)} \leqslant \frac{|\VC|+1}{|\VC|} \leqslant \frac{4}{3}.$$

Now $k \geqslant 4$ and we assume that $\tau_c \leqslant 4/3 \tau$ holds for every connected $(P_5,C_4)$-free graph with a minimum vertex cover of at most $k-3$ connected components. Let $S_1, S_2, S_3, \dots S_k$ be the vertex sets of the connected components of $G[\VC]$.
By Lemma~\ref{simple-argumentation} at least one of these sets, say $P_{\VC}(S_3)$, is not a cutset of $G$. 
By applying twice the Lemma~\ref{simple-argumentation}, two more of these sets, say $P_{\VC}(S_2)$, resp.~$P_{\VC}(S_1)$, are not a cutset of $G[V \setminus P_{\VC}(S_3)]$, resp.~$G[V \setminus (P_{\VC}(S_2) \cup P_{\VC}(S_3))]$. 
Let $\VC'= \VC \setminus (S_1 \cup S_2 \cup S_3)$ and note that $\VC'$ is a minimum vertex cover of $G'= G[V \setminus (P_{\VC}(S_1) \cup P_{\VC}(S_2) \cup P_{\VC}(S_3))]$. 
By the induction hypothesis, there is a minimum connected vertex cover of $G'$, say $\CVC'$, with $|\CVC'| \leqslant 4/3 |\VC'|$.

We show that there exists a connected vertex cover $\CVC$ of $G$ with $|\CVC| \leqslant |S_1|+|S_2|+|S_3|+|\CVC'|+1$, built from $S_1, S_2, S_3$ and $\CVC'$. 
Indeed, we have $$ \frac{\tau_c(G)}{\tau(G)} \leqslant \frac{|S_1|+|S_2|+|S_3|+1+|\CVC'|}{|S_1|+|S_2|+|S_3|+|\VC'|} \leqslant \max\left(\frac{|S_1|+|S_2|+|S_3|+1}{|S_1|+|S_2|+|S_3|},\frac{|\CVC'|}{|\VC'|}\right) \leqslant \frac{4}{3}.$$
We refer to $\CVC'$ as $S_4$ for ease of writing. 
We observe that the set $V(G) \setminus (S_1 \cup S_2 \cup S_3 \cup S_4)$ is an independent set because its complement is a vertex cover of $G$.
We complete the proof with the following case distinction.

\textbf{Case 1.} There exists one component, say $S_1$, such that the other connected components are a distance $2$ from $S_1$. 
Let $x_i$ be a vertex adjacent to $S_1$ and $S_i$, for $i= 2, 3, 4$.

\textbf{Case 1.1.} The $x_i$ are mutually distinct. 
Since $G[S_1 \cup S_3 \cup S_4 \cup \{x_3,x_4\} ]$ contains an induced $P_5$, we are in the next case. 

\textbf{Case 1.2.} Two of the $x_i$ are equal, and the third one is distinct from them. 
We can suppose without loss of generality that $x_3 = x_4$. 
The path $S_2x_2S_1x_3S_3$ forms again a $P_5$. 
If there is an edge between $x_3$ and $S_2$, we take $x_3$ to connect $S_1, S_2, S_3$ and $S_4$.
Otherwise, there must be an edge between $x_2$ and $S_3$. 
But then, we have an induced $P_5$ in $G[S_2 \cup S_3 \cup S_4 \cup \{x_2,x_3\}]$, a contradiction.

\textbf{Case 1.3.} It holds that $x_2 = x_3 = x_4$. 
We have immediately one vertex to connect $S_1, S_2, S_3$ and $S_4$.

\textbf{Case 2.} Up to a renaming of the $S_i$, the distance between $S_i$ and $S_{i+1}$ is $2$, $i= 1, 2, 3$. 
Let $x_i$ be a vertex adjacent to $S_i$ and $S_{i+1}$, $i =1,2,3$. 
Because $G$ is $P_5$-free, $S_1$ must be adjacent to $x_2$ or $x_1$ must be adjacent to $S_3$. 
Hence, we are in Case 1. 
\end{proof}

To prove Theorem \ref{thm PoC 3/2}, we need to use the following lemma.

\begin{lemma}\label{withoutC7}
Let $G$ be a $(C_6,P_7,\Delta_1,\Delta_2)$-free graph, let $\VC$ be a vertex cover of $G$ such that $G[\VC]$ has exactly three connected components, and let $G$ contain an induced cycle of length $7$ intersecting all connected components of $G[\VC]$. 
Then there exists a connected vertex cover $\CVC$ such that 
\begin{center}
\begin{tabular}{ll}
  $|\CVC| \leqslant |\VC|+1$  & \ \ if $|\VC| > 4,$ \\
  $|\CVC| = 6$  &\ \ if $|\VC| = 4.$ \\
\end{tabular}
\end{center}
\end{lemma}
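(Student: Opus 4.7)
The plan is to label the induced $C_7$ canonically, handle $|\VC|=4$ directly, and for $|\VC|>4$ produce a cover of the form $\VC \cup \{x\}$ or $(\VC\setminus\{v\}) \cup \{u_i, u_j\}$ via a $P_7$-extraction argument. First, on the induced $C_7$, the vertices of $V\setminus\VC$ form an independent set (since $V(G)\setminus\VC$ is independent), so they number at most three. The maximal runs of consecutive $\VC$-vertices on the cycle lie inside single components of $G[\VC]$ and must collectively meet all three of $S_1, S_2, S_3$, forcing at least three runs. Hence there are exactly three non-$\VC$ vertices on the $C_7$ and three runs whose sizes sum to $4$, namely $(2,1,1)$. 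I label the cycle in order as $u_1, a, b, u_2, c, u_3, d$, with $u_i \in V\setminus\VC$, and up to symmetry $a,b\in S_1$, $c\in S_2$, $d\in S_3$.

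When $|\VC|=4$, I have $\VC = \{a,b,c,d\}$, and $\CVC := \VC \cup \{u_1,u_2\}$ is a vertex cover of size $6$ whose induced subgraph is connected, as one reads from the walk $c, u_2, b, a, u_1, d$.

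For $|\VC|>4$, some $S_i$ contains a vertex beyond those on the $C_7$. Suppose first that $w \in S_2\setminus\{c\}$ is a neighbor of $c$ inside $G[S_2]$. Consider the seven-vertex sequence $w, c, u_2, b, a, u_1, d$: every pair of non-consecutive vertices is a known non-edge except possibly $w u_1$ and $w u_2$. Indeed, the other candidates are either non-edges of the induced $C_7$, pairs inside the independent set $V\setminus \VC$, or pairs of $\VC$-vertices living in distinct components of $G[\VC]$. Thus $P_7$-freeness forces $w \sim u_1$ or $w \sim u_2$. Applying the same reasoning to $w, c, u_3, d, u_1, a, b$ forces $w \sim u_1$ or $w \sim u_3$. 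If $w \sim u_1$, then $u_1$ is adjacent to some vertex of each $S_i$ and $\VC \cup \{u_1\}$ is a connected vertex cover of size $|\VC|+1$. Parallel arguments handle an extra vertex in $S_1$ or $S_3$.

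In the residual configurations, every extra vertex in $S_i$ is adjacent to exactly the two $u_k$ bordering its run on the $C_7$ but not to the third. Here I switch to a cover of the form $(\VC\setminus\{v\}) \cup \{u_i, u_j\}$ with $v$ and $\{u_i, u_j\}$ chosen to exploit the forced adjacencies already derived. Checking this remains a vertex cover amounts to showing that every non-$\VC$ neighbor of $v$ lies in $\{u_i, u_j\}$: any hypothetical outsider $y\in V\setminus\VC$ would be non-adjacent to $u_1, u_2, u_3$ by independence, and a further $P_7$-extraction on a seven-vertex sequence through $y$ and a chord-free traversal of the $C_7$ yields an induced $P_7$ unless $y$ is already in $\{u_i, u_j\}$. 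Connectivity of the modified cover then follows from the edges produced in the first round of extraction. The main obstacle is precisely this residual bookkeeping: combining the several $P_7$-extractions with the $C_6$, $\Delta_1$, and $\Delta_2$ forbidden subgraphs to dispose of the handful of stubborn sub-cases in a uniform way.
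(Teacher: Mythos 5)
Your setup is sound and matches the paper's: the induced $C_7$ meets the cover in runs of sizes $(2,1,1)$, the three non-cover vertices $u_1,u_2,u_3$ are pairwise non-adjacent, the $|\VC|=4$ case is settled by $\VC\cup\{u_1,u_2\}$ (the paper uses the analogous $\VC\cup\{x_2,x_4\}$), and your two $P_7$-extractions correctly force a neighbour $w$ of $c$ in $S_2$ to satisfy $w\sim u_1$ (done, since then $u_1$ meets all three components) or $w\sim u_2$ and $w\sim u_3$. Up to that point the argument is correct and is essentially the paper's.

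The proof is not complete beyond that point, and the gap is not mere bookkeeping. First, your forced-adjacency claim is established only for vertices of $S_i$ that are adjacent to the cycle vertex of $S_i$; for a vertex $w'$ at distance two from $c$ inside $G[S_2]$ the natural sequence $w',w,c,u_2,b,a,u_1$ is no longer chord-free because $w\sim u_2$ already holds, so the property must be \emph{propagated} edge by edge through $S_i$ (the paper does exactly this: if one endpoint of an edge of $S_2$ is adjacent to both connectors, so is the other, else a $P_7$ appears). Second, the two-vertex run is genuinely asymmetric: for $v\in S_1$ adjacent to $a$, the path $v,a,u_1,d,u_3,c,u_2$ only forces $v$ to be adjacent to \emph{at least one} of $u_1,u_2,u_3$, not the dichotomy you rely on, so ``parallel arguments'' do not carry over verbatim. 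Third, when you pass to $(\VC\setminus\{v\})\cup\{u_i,u_j\}$ you must choose $v$ to be a non-cutvertex of $G[S_i]$, otherwise connectivity fails; you never impose this. Finally, the claim that a hypothetical uncovered neighbour $y$ of $v$ yields an induced $P_7$ by ``a further extraction'' is precisely where $P_7$-freeness alone is insufficient: one must first exclude $y$ being adjacent to various cycle vertices, and the paper does so using $\Delta_2$-freeness (for $y\sim x_1$ or $y\sim x_5$) and $C_6$-freeness (for $y\sim x_6$) before the $P_7$ appears. Your closing sentence concedes that these ``stubborn sub-cases'' remain open; they are the substance of the lemma, so as written the proposal is an outline of the paper's strategy rather than a proof.
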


\begin{proof}
Let $x_1x_2x_3x_4x_5x_6x_7$ be an induced cycle intersecting the three connected components of $\VC$, say $S_1, S_2$ and $S_3$. 
Without loss of generality, we can suppose $x_1 \in S_1$, $x_3 \in S_2$ and $x_5, x_6 \in S_3$ (see Fig~\ref{fig:three cc with C7}). We can assume that no vertex is adjacent to 
$S_1, S_2$ and $S_3$.

\begin{figure}[!ht]
  \centering
\begin{tikzpicture}
\tikzstyle{every node}=[draw,circle,fill=black,minimum size=4pt,inner sep=0pt]
\node(x1) at (2,1){};
\node(x2) at (0,0){};
\node(x3) at (-2,1){};
\node(x4) at (-2,2){};
\node(x5) at (-1,3){};
\node(x6) at (1,3){};
\node(x7) at (2,2){};
\node [black,right=1mm, draw=none,fill=none] at (x1) {$x_1$};
\node [black,below=1mm, draw=none,fill=none] at (x2) {$x_2$};
\node [black,left=1mm, draw=none,fill=none] at (x3) {$x_3$};
\node [black,left=1mm, draw=none,fill=none] at (x4) {$x_4$};
\node [black,above=0.4mm, draw=none,fill=none] at (x5) {$x_5$};
\node [black,above=0.4mm, draw=none,fill=none] at (x6) {$x_6$};
\node [black,right=1mm, draw=none,fill=none] at (x7) {$x_7$};
\draw (x1) -- (x2) -- (x3) -- (x4) -- (x5) -- (x6) -- (x7) -- (x1);
\draw (x1) circle (0.6cm);
\draw (x3) circle (0.6cm);
\node[draw = black,ellipse,fill=none,inner sep=7pt,fit= (x5) (x6)]{};
\node [black, draw=none,fill=none] at (2.9,1) {$S_1$};
\node [black, draw=none,fill=none] at (-2.9,1) {$S_2$};
\node [black, draw=none,fill=none] at (2.2,3) {$S_3$};
\end{tikzpicture}
  \caption{G contains an induced $C_7$ intersecting all components of a vertex cover.}
\label{fig:three cc with C7}
\end{figure}
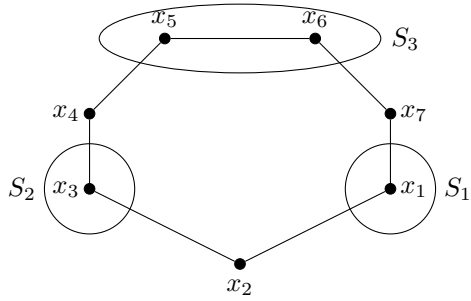

If $|N(x_1) \cap N(x_3) | \geqslant 2$, $|N(x_1) \cap N(x_6) | \geqslant 2$, or $|N(x_5) \cap N(x_3) | \geqslant 2$, then we have an induced subgraph $\Delta_2$.

Otherwise $N(x_1) \cap N(x_3)  = \{x_2\}$ and $N(x_1) \cap N(x_6)  = \{x_7\}$ and $N(x_5) \cap N(x_3)  = \{x_4\}$. 
We distinguish several cases, depending on the cardinality of $S_3$. 

The first case is $|S_3| = 2$. 
If $|S_1|=|S_2|=1$, i.e., $|\VC | = 4$, we have the connected vertex cover $S_1 \cup S_2 \cup S_3 \cup \{x_2, x_4\}$ of six vertices.

We can suppose $|S_2| >1$. 
Then every vertex of $S_2$ is adjacent to both $x_2$ and $x_4$. 
Indeed, if an edge $yz$ of $S_2$ has one endvertex, say $y$, adjacent to both $x_2$ and $x_4$, then $z$ must be adjacent to both $x_2$ and $x_4$, otherwise $G$ contains a $P_7$. 
Let $x$ a vertex of $S_2 \setminus \{x_3\}$ which is not a cutvertex of $G[S_2]$, i.e. $Y = S_1 \cup S_3 \cup (S_2 \setminus \{x\}) \cup \{x_2,x_4\}$ induces a connected graph. If $Y$ is not a vertex cover, there exists a vertex $t \notin Y$ adjacent to $x$. 
Note that $t$ is distinct from $x_7$, because no vertex is adjacent to $S_1,S_2$ and $S_3$, and $t$ is not adjacent to $x_1$ or $x_5$, because $G$ is $\Delta_2$-free. 
Moreover $t$ is not adjacent to $x_6$ since $G$ is $C_6$-free. 
Therefore we have an induced $P_7$ subgraph (see Fig~\ref{fig:3cc with P7}).

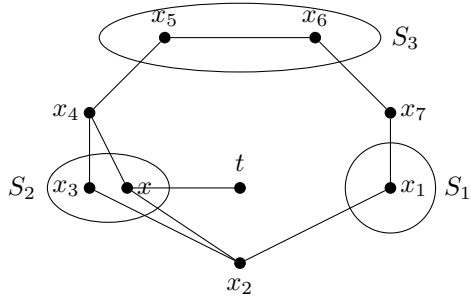
\begin{figure}[!ht]
  \centering
\begin{tikzpicture}
\tikzstyle{every node}=[draw,circle,fill=black,minimum size=4pt,inner sep=0pt]
\node(x1) at (2,1){};
\node(x2) at (0,0){};
\node(x3) at (-2,1){};
\node(x4) at (-2,2){};
\node(x5) at (-1,3){};
\node(x6) at (1,3){};
\node(x7) at (2,2){};
\node(x) at (-1.5,1){};
\node(t) at (0,1){};
\node [black,right=1mm, draw=none,fill=none] at (x1) {$x_1$};
\node [black,below=1mm, draw=none,fill=none] at (x2) {$x_2$};
\node [black,left=1mm, draw=none,fill=none] at (x3) {$x_3$};
\node [black,above=2mm, right=1mm, draw=none,fill=none] at (x) {$x$};
\node [black,left=1mm, draw=none,fill=none] at (x4) {$x_4$};
\node [black,above=0.4mm, draw=none,fill=none] at (x5) {$x_5$};
\node [black,above=0.4mm, draw=none,fill=none] at (x6) {$x_6$};
\node [black,right=1mm, draw=none,fill=none] at (x7) {$x_7$};
\node [black,above=2mm, draw=none,fill=none] at (t) {$t$};
\draw (x1) -- (x2) -- (x3) -- (x4) -- (x5) -- (x6) -- (x7) -- (x1);
\draw (x) -- (t);
\draw (x2)--(x) -- (x4);
\draw (x1) circle (0.6cm);
\node[draw = black,ellipse,fill=none,inner sep=7pt,fit= (x5) (x6)]{};
\node[draw = black,ellipse,fill=none,inner sep=7pt,fit= (x3) (x)]{};
\node [black, draw=none,fill=none] at (2.9,1) {$S_1$};
\node [black, draw=none,fill=none] at (-2.9,1) {$S_2$};
\node [black, draw=none,fill=none] at (2.2,3) {$S_3$};
\end{tikzpicture}
  \caption{G contains an induced $C_7$ and $P_7$ intersecting all components of a vertex cover.}
\label{fig:3cc with P7}
\end{figure}

In the second case, $|S_3| > 2$. If there exists a vertex in $S_3 \setminus \{x_5,x_6\}$ which is adjacent to neither $x_4$ nor $x_7$, we have an induced $P_7$ subgraph.
Let $y \in S_3\setminus \{x_5,x_6\}$ such that $y$ is not a cutvertex of $G[S_3]$. 
We can suppose that $y$ is adjacent to $x_4$. Because $G$ is $C_6$-free, $y$ is not adjacent to $x_7$. If $Y = S_1\cup S_2 \cup (S_3 \setminus \{y\}) \cup \{x_7, x_4\}$ is not a vertex cover, there exists a vertex $t \notin Y$ adjacent to $y$. 
Note that $t$ is distinct from $x_2$, because no vertex is adjacent to $S_1, S_2$ and $S_3$ (see Fig~\ref{fig:cc with C6andD2}).
If $t$ is adjacent to $x_1$ (resp.~$x_3$), we have an induced $C_6$ subgraph (resp.~$\Delta_2$).

\begin{figure}[!ht]
  \centering
\begin{tikzpicture}
\tikzstyle{every node}=[draw,circle,fill=black,minimum size=4pt,inner sep=0pt]
\node(x1) at (2,1){};
\node(x2) at (0,0){};
\node(x3) at (-2,1){};
\node(x4) at (-2,2){};
\node(x5) at (-1,3){};
\node(x6) at (1,3){};
\node(x7) at (2,2){};
\node(y) at (0,2.6){};
\node(t) at (0,1.6){};
\node [black,right=1mm, draw=none,fill=none] at (x1) {$x_1$};
\node [black,below=1mm, draw=none,fill=none] at (x2) {$x_2$};
\node [black,left=1mm, draw=none,fill=none] at (x3) {$x_3$};
\node [black,right=2mm,  draw=none,fill=none] at (y) {$y$};
\node [black,left=1mm, draw=none,fill=none] at (x4) {$x_4$};
\node [black,above=0.4mm, draw=none,fill=none] at (x5) {$x_5$};
\node [black,above=0.4mm, draw=none,fill=none] at (x6) {$x_6$};
\node [black,right=1mm, draw=none,fill=none] at (x7) {$x_7$};
\node [black,below=2mm, draw=none,fill=none] at (t) {$t$};
\draw (x1) -- (x2) -- (x3) -- (x4) -- (x5) -- (x6) -- (x7) -- (x1);
\draw (y) -- (t);
\draw (x4)--(y) ;
\draw (x1) circle (0.6cm);
\draw (x3) circle (0.6cm);
\node[draw = black,ellipse,fill=none,inner sep=7pt,fit= (x5) (x6) (y)]{};
\node [black, draw=none,fill=none] at (2.9,1) {$S_1$};
\node [black, draw=none,fill=none] at (-2.9,1) {$S_2$};
\node [black, draw=none,fill=none] at (2.2,3) {$S_3$};
\end{tikzpicture}
  \caption{$Y = S_1 \cup S_2 \cup (S_3 \setminus \{y\}) \cup \{x_4, x_7\}$ is not a vertex cover of $G$.}
\label{fig:cc with C6andD2}
\end{figure}
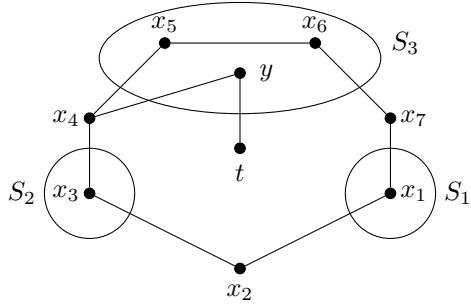
Otherwise we have an induced $P_7$ subgraph.
\end{proof}

\begin{proof}[Theorem \ref{thm PoC 3/2}]
If $G$ contains one of the four forbidden induced subgraphs, say $H$, then $\tau_c(H)/\tau(H)  = 5/3.$
It remains to prove that the Price of Connectivity of a $(P_7,C_6, \Delta_1, \Delta_2)$-free graph is bounded by $3/2$. 
So let $G$ be a $(P_7,C_6, \Delta_1, \Delta_2)$-free graph. 
The proof is by induction on the number of connected components of a minimum vertex cover. 
Let $\VC$ be a minimum vertex cover of $G$ with the minimum number of connected components, say $k$. 

If $\VC$ is connected ($k=1$), then $\tau_c/\tau = 1$. 

If $k = 2$, by Lemma~\ref{dist2}, $$\tau_c/\tau \leqslant \frac{|\VC|+1}{|\VC|} \leqslant 1 + \frac{1}{|\VC|} \leqslant 1 + \frac{1}{2} = \frac{3}{2}.$$

Now let $k \geqslant 3$.
We may assume that $\tau_c \leqslant 3/2 \tau$ holds for every $(P_7,C_6, \Delta_1, \Delta_2)$-free graph with a minimum vertex cover of at most $k-2$ connected components. 
Let $S_1, S_2, S_3, \dots, S_k$ be the vertex set of the connected components of $G[\VC]$. 
By Lemma~\ref{simple-argumentation}, we may assume that the set $P_{\VC}(S_2)$, is not a cutset of $G$, and that the set $P_{\VC}(S_1)$ is not a cutset of $G' = G[V \setminus P_{\VC}(S_2)]$.
Note that the set $\VC' = \VC \setminus (S_1 \cup S_2)$ is a minimum vertex cover of the graph $G'' = G[ V \setminus (P_{\VC}(S_1) \cup P_{\VC}(S_2))]$. 
By the induction hypothesis, there is a minimum connected vertex cover of $G''$, say $\CVC'$, with $|\CVC'| \leqslant 3/2 |\VC'|$.

We show that there exists a connected vertex cover $\CVC$ of $G$ such that $|\CVC| \leqslant |S_1|+|S_2|+|\CVC'|+1$, built from $S_1, S_2$ and $\CVC'$. Indeed, we have 
$$ \frac{\tau_c(G)}{\tau(G)} \leqslant \frac{|S_1|+|S_2|+1+|\CVC'|}{|S_1|+|S_2|+|\VC'|} \leqslant \max\left(\frac{|S_1|+|S_2|+1}{|S_1|+|S_2|},\frac{|\CVC'|}{|\VC'|}\right) \leqslant \frac{3}{2}.$$
We refer to the set $\CVC'$ as $S_3$ for the ease of writing. 
We can suppose that there does not exist any single vertex to connect $S_1, S_2,$ and $S_3$.
Without loss of generality, there is a vertex $x_i$ adjacent only to $S_i$ and $S_{i+1}$, $i=1,2$, such that $x_1 \neq x_2$ (see Fig.~\ref{fig:three cc to connect}). 

\begin{figure}
\centering
\begin{minipage}{.49\textwidth}
 \subfloat[Initial case]{\label{fig:three cc to connect}
\begin{tikzpicture}
\tikzstyle{every node}=[draw,circle,fill=black,minimum size=4pt,inner sep=0pt]
\node(x1) at (-1,1){};
\node(x2) at (1,1){};
\node [black,left=1mm, draw=none,fill=none] at (x1) {$x_1$};
\node [black,right=1mm, draw=none,fill=none] at (x2) {$x_2$};
\draw (-1.7,0.3) -- (x1) -- (-0.3,1.7);
\draw (1.7,0.3) -- (x2) -- (0.3,1.7);
\draw (-2,0) circle (0.6cm);
\draw (2,0) circle (0.6cm);
\draw (0,2) circle (0.6cm);
\node [black, draw=none,fill=none] at (-2,0) {$S_1$};
\node [black, draw=none,fill=none] at (0,2) {$S_2$};
\node [black, draw=none,fill=none] at (2,0) {$S_3$};
\end{tikzpicture}}
\end{minipage}
\begin{minipage}{.49\textwidth}
 \subfloat[Two private edges of a vertex cover]{\label{fig:three cc with private edges}
\begin{tikzpicture}
\tikzstyle{every node}=[draw,circle,fill=black,minimum size=4pt,inner sep=0pt]
\node(x1) at (-1,1){};
\node(x2) at (1,1){};
\node(y) at (-0.4,0){};
\node(z) at (0.4,0){};
\node(y1) at (-1.6,0){};
\node(z3) at (1.6,0){};
\node [black,left=1mm, draw=none,fill=none] at (x1) {$x_1$};
\node [black,right=1mm, draw=none,fill=none] at (x2) {$x_2$};
\node [black,above=1mm, draw=none,fill=none] at (y) {$y$};
\node [black,above=1mm, draw=none,fill=none] at (z) {$z$};
\node [black,below=2mm, left = 0.1mm, draw=none,fill=none] at (y1) {$y_1$};
\node [black,below=2mm, right = 0.2mm, draw=none,fill=none] at (z3) {$z_3$};
\draw (-1.7,0.3) -- (x1) -- (-0.3,1.7);
\draw (1.7,0.3) -- (x2) -- (0.3,1.7);
\draw (y1) -- (y);
\draw (z3) -- (z);
\draw (-2,0) circle (0.6cm);
\draw (2,0) circle (0.6cm);
\draw (0,2) circle (0.6cm);
\node [black, draw=none,fill=none] at (-2.2,0) {$S_1$};
\node [black, draw=none,fill=none] at (0,2) {$S_2$};
\node [black, draw=none,fill=none] at (2.2,0) {$S_3$};
\end{tikzpicture}}
\end{minipage}

\label{fig:thm 3/2 1}
\caption{Three components of a vertex cover to connect by adding only one vertex.}
\end{figure}
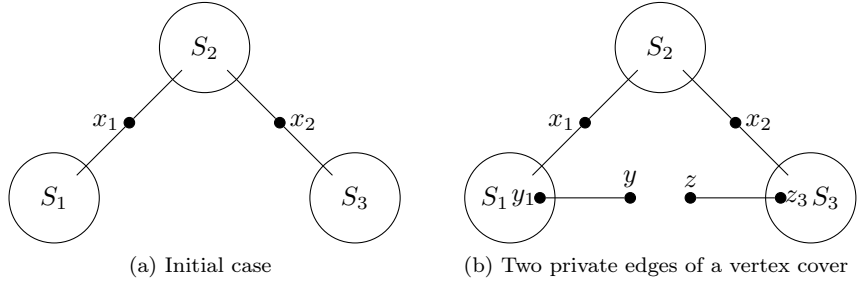

Note that $x_1$ and $x_2$ are not adjacent because $\VC$ is a vertex cover.
Let $y_1 \in S_1$ and $z_3 \in S_3$ be two vertices such that $y_1$ and $z_3$ are not cutvertices in $G[S_1\cup S_2 \cup S_3 \cup \{x_1,x_2\}]$.
If $S_1 \cup S_2 \cup (S_3 \setminus \{z_3\}) \cup \{x_1, x_2\}$ or $(S_1 \setminus \{y_1\}) \cup S_2 \cup S_3 \cup \{x_1, x_2\}$ is a vertex cover, then $\tau_c(G)/\tau(G) \leqslant 3/2$. 
Thus, there exist two edges, say $y_1y$ and $z_3z$, with $y, z \notin S_1 \cup S_2 \cup S_3 \cup \{x_1, x_2\}$ (see Fig.~\ref{fig:three cc with private edges}). 
Note that $y$ can be equal to $z$.

Now, we discuss on the adjacency of $y$ with $S_3$ and $S_2$.

\textbf{Case 1.} The vertex $y$ is adjacent to $S_3$.
Thus, $y$ is not adjacent to $S_2$.  
If the shortest induced cycle via $S_1 \cup S_2 \cup S_3 \cup \{x_1,x_2,y\}$ is of length $6$ or more than $8$, we have an induced $C_6$ or a $P_7$. 
Thus the shortest induced cycle via the three connected components has $7$ vertices. 
By Lemma \ref{withoutC7}, it is clear that $\tau_c(G) /\tau(G) \leqslant 3/2$ if $|S_1|+|S_2|+|S_3| > 4$. 
Otherwise $S_1, S_2$ and $S_3$ are three connected components of the initial vertex cover $\VC$. 
Thus, by Lemma \ref{withoutC7}, $\tau_c(G) / \tau(G) \leqslant 6/4 = 3/2$.

The cases that $z$ is adjacent to $S_1$ or $y=z$ are dealt with similarly.

\textbf{Case 2.} The vertex $y$ is adjacent to $S_2$ and $z$ is not adjacent to $S_2$.
Since $G[S_1 \cup S_2 \cup S_3 \cup \{x_1, x_2, y, z\}]$ does not contain $P_7$, there exists $t \in N(x_1) \cap N(x_2) \cap S_2$ and $t$ is adjacent to $y$. 
Hence, we have an induced $\Delta_2$.

\textbf{Case 3.} Both $y$ and $z$ are adjacent to $S_2$. 
Thus $y$ (resp. $z$) is not adjacent to $S_3$ (resp. $S_1$).
Let $P$ be a shortest path from $z$ to $y$ that goes through $S_3, \{x_2\}, S_2, \{x_1\}$, and $S_1$. 
If $P$ has $7$ vertices, then we have an induced $P_7$, $\Delta_1$ or $\Delta_2$ subgraph, depending on the adjacency of $y$ and $z$ with $S_2$.
If $P$ contains at least nine vertices, we have an induced $P_7$ subgraph in $G[S_1 \cup S_2 \cup S_3 \cup \{x_1, x_2\}]$.
Otherwise $P$ has exactly $8$ vertices. 
There are two possibilities.

\textbf{Case 3.1.} $S_1$ (or $S_3$) contains an edge of $P$ (see fig.~\ref{fig:edge in S1}). 
Thus we have an induced $P_7$ or $\Delta_2$ in $G[\{z,x_1,x_2\} \cup S_1 \cup S_2 \cup S_3\}]$, depending on the adjacency between $S_2$ and $z$.

\textbf{Case 3.2.} $S_2$ contains an edge of $P$ (see Fig.~\ref{fig:edge in S2}), say $vu$.
Then, if $z$ is not adjacent to $v$, $G$ contains a $P_7$ or a $\Delta_2$, depending on the adjacency between $z$ and $u$. 
Thus $z$ is adjacent to $v$. 
Hence, we have an induced $P_7$ or $\Delta_2$ subgraph in $G[S_1 \cup (S_2 \setminus \{u\}) \cup S_3 \cup \{x_1, x_2, y, z\}]$, depending on the adjacency between $y$ and $v$.

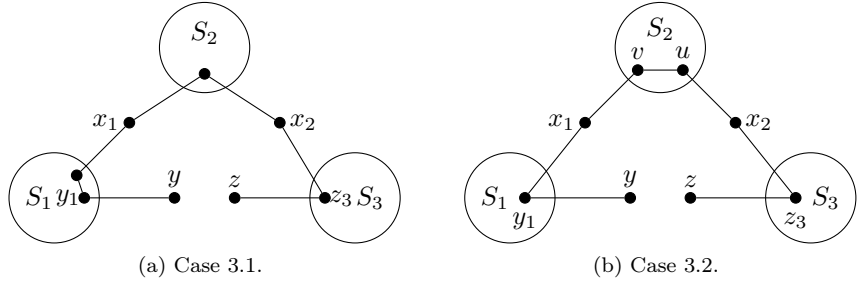
\begin{figure}
\centering
\begin{minipage}{.49\textwidth}
 \subfloat[Case 3.1.]{\label{fig:edge in S1}
\begin{tikzpicture}
\tikzstyle{every node}=[draw,circle,fill=black,minimum size=4pt,inner sep=0pt]
\node(x1) at (-1,1){};
\node(x2) at (1,1){};
\node(y) at (-0.4,0){};
\node(z) at (0.4,0){};
\node(y1) at (-1.6,0){};
\node(z3) at (1.6,0){};
\node(t) at (0,1.65){};
\node(r) at (-1.7,0.3){};
\node [black,left=1mm, draw=none,fill=none] at (x1) {$x_1$};
\node [black,right=1mm, draw=none,fill=none] at (x2) {$x_2$};
\node [black,above=1mm, draw=none,fill=none] at (y) {$y$};
\node [black,above=1mm, draw=none,fill=none] at (z) {$z$};
\node [black,below=2mm, left = 0.1mm, draw=none,fill=none] at (y1) {$y_1$};
\node [black,below=2mm, right = 0.2mm, draw=none,fill=none] at (z3) {$z_3$};
\draw (y1) -- (r) -- (x1) -- (t);
\draw (z3) -- (x2) -- (t);
\draw (y1) -- (y);
\draw (z3) -- (z);
\draw (-2,0) circle (0.6cm);
\draw (2,0) circle (0.6cm);
\draw (0,2) circle (0.6cm);
\node [black, draw=none,fill=none] at (-2.2,0) {$S_1$};
\node [black, draw=none,fill=none] at (0,2.2) {$S_2$};
\node [black, draw=none,fill=none] at (2.2,0) {$S_3$};
\end{tikzpicture}}
\end{minipage}
\begin{minipage}{.49\textwidth}
 \subfloat[Case 3.2.]{\label{fig:edge in S2}
\begin{tikzpicture}
\tikzstyle{every node}=[draw,circle,fill=black,minimum size=4pt,inner sep=0pt]
\node(x1) at (-1,1){};
\node(x2) at (1,1){};
\node(y) at (-0.4,0){};
\node(z) at (0.4,0){};
\node(y1) at (-1.8,0){};
\node(z3) at (1.8,0){};
\node(u) at (0.3,1.7){};
\node(v) at (-0.3,1.7){};
\node [black,left=1mm, draw=none,fill=none] at (x1) {$x_1$};
\node [black,right=1mm, draw=none,fill=none] at (x2) {$x_2$};
\node [black,above=1mm, draw=none,fill=none] at (y) {$y$};
\node [black,above=1mm, draw=none,fill=none] at (z) {$z$};
\node [black,below=1mm, draw=none,fill=none] at (y1) {$y_1$};
\node [black,below=1mm, draw=none,fill=none] at (z3) {$z_3$};
\node [black,above=1mm, draw=none,fill=none] at (u) {$u$};
\node [black,above=1mm, draw=none,fill=none] at (v) {$v$};
\draw (y1) -- (x1) -- (v) -- (u);
\draw (z3) -- (x2) -- (u);
\draw (y1) -- (y);
\draw (z3) -- (z);
\draw (-2,0) circle (0.6cm);
\draw (2,0) circle (0.6cm);
\draw (0,2) circle (0.6cm);
\node [black, draw=none,fill=none] at (-2.2,0) {$S_1$};
\node [black, draw=none,fill=none] at (0,2.25) {$S_2$};
\node [black, draw=none,fill=none] at (2.2,0) {$S_3$};
\end{tikzpicture}}
\end{minipage}

\label{fig:thm 3/2 2}
\caption{Three components of a vertex cover to connect by adding only one vertex.}
\end{figure}

\textbf{Case 4.} The vertex $y$ is adjacent to neither $S_2$ nor $S_3$.
We can suppose that $z$ is adjacent to neither $S_1$ nor $S_2$ (thus $y \neq z$).
Thus, $G$ contains a $P_7$.
\end{proof}

\subsection{PoC-Critical Graphs}

\subsubsection{PoC-Critical Chordal Graphs}

\begin{lemma}\label{no bridge}
Let $G$ be a critical graph. For every minimum vertex cover $\VC$ of $G$, there does not exist a bridge of $G$ with endvertices in $\VC$. 
\end{lemma}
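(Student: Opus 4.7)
The plan is to argue by contradiction: assume $C$ is a minimum vertex cover of the critical graph $G$ containing both endpoints of some bridge $uv$, and let $G_1\ni u$ and $G_2\ni v$ be the two connected components of $G-uv$. First I would observe that $\deg_G(u)\geq 2$ and $\deg_G(v)\geq 2$: otherwise (say $\deg_G(u)=1$) the only edge incident to $u$ would be $uv$, already covered by $v\in C$, so $C\setminus\{u\}$ would still be a vertex cover, contradicting minimality. In particular, $|V(G_i)|\geq 2$ for $i=1,2$.

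Next I would establish two structural identities: $\tau(G)=\tau(G_1)+\tau(G_2)$ and $\tau_c(G)=\tau_c^u(G_1)+\tau_c^v(G_2)$, where $\tau_c^x(H)$ denotes the minimum size of a connected vertex cover of $H$ containing the prescribed vertex $x$. The first identity follows from a simple exchange argument on $C$: the traces $C\cap V(G_i)$ are vertex covers of the $G_i$, and replacing either by a strictly smaller vertex cover (while still covering the bridge through the other endpoint in $C$) would contradict the minimality of $C$; in particular, some minimum vertex cover of $G_1$ contains $u$ and some of $G_2$ contains $v$. The second identity uses that $uv$ is the only edge between $V(G_1)$ and $V(G_2)$, so any connected vertex cover of $G$ must contain both $u$ and $v$ in order to join its two sides; the minimum is then attained by gluing a min connected vertex cover of $G_1$ containing $u$ with one of $G_2$ containing $v$.

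The key step is to consider the two proper induced subgraphs $H:=G[V(G_1)\cup\{v\}]$ and $H':=G[V(G_2)\cup\{u\}]$, which are proper thanks to $|V(G_i)|\geq 2$. The subgraph $H$ is just $G_1$ with a pendant vertex $v$ attached at $u$. A short case analysis yields $\tau(H)=\tau(G_1)$ and $\tau_c(H)=\tau_c^u(G_1)$: the pendant structure of $v$ forces $u$ into any connected vertex cover of $H$, and removing $v$ from such a cover (if present) leaves a connected vertex cover of $G_1$ containing $u$ that is no larger. Symmetrically, $\tau(H')=\tau(G_2)$ and $\tau_c(H')=\tau_c^v(G_2)$, so $\poc(H)=\tau_c^u(G_1)/\tau(G_1)$ and $\poc(H')=\tau_c^v(G_2)/\tau(G_2)$.

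To close the argument, I would write
\[
\poc(G)=\frac{\tau_c^u(G_1)+\tau_c^v(G_2)}{\tau(G_1)+\tau(G_2)}=\frac{\tau(G_1)}{\tau(G_1)+\tau(G_2)}\,\poc(H)+\frac{\tau(G_2)}{\tau(G_1)+\tau(G_2)}\,\poc(H'),
\]
exhibiting $\poc(G)$ as a convex combination of $\poc(H)$ and $\poc(H')$ with strictly positive weights. Hence $\poc(G)\leq\max(\poc(H),\poc(H'))$, while the criticality of $G$ gives $\poc(H)<\poc(G)$ and $\poc(H')<\poc(G)$, which forces $\poc(G)<\poc(G)$, the desired contradiction. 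The main obstacle is the careful bookkeeping behind the two structural identities, most notably verifying $\tau_c(H)=\tau_c^u(G_1)$ (and not something smaller obtained by putting $v$ into the cover); once that is in place, the convex combination argument is essentially forced.
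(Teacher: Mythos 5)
Your proposal is correct and follows essentially the same route as the paper: split $G$ along the bridge, observe that each side together with the opposite endpoint of the bridge is a proper induced subgraph (a component plus a pendant vertex), relate $\tau$ and $\tau_c$ of $G$ to those of the two pieces, and conclude via the mediant/convex-combination inequality that $\poc(G)\leq\max(\poc(H),\poc(H'))$, contradicting criticality. Your version is if anything slightly more careful than the paper's (you verify the degree condition ensuring $H$ and $H'$ are proper, and you prove equalities where the paper only needs the two inequalities $\tau(G)\geq\tau(G_1')+\tau(G_2')$ and $\tau_c(G)\leq\tau_c(G_1')+\tau_c(G_2')$), but the underlying argument is the same.
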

\begin{proof}
Suppose there exists a bridge $xy$ with $x,y \in \VC$.
The removal of the edge $xy$ results in two connected subgraphs of $G$, which we denote by $G_1$ resp. $G_2$.
We can assume that $x \in V(G_1)$ and $y \in V(G_2)$.
Let $G_1'$ be the graph obtained from $G_1$ by attaching a pendant vertex to $x$.
Analogously let $G_2'$ be the graph obtained from $G_2$ by attaching a pendant vertex to $y$.

We observe that $\VC \cap V(G_1)$ is a vertex cover of $G_1'$ and $\VC \cap V(G_2)$ is a vertex cover of $G_2'$.
Thus
\begin{equation} \label{eqn tau ge tau + tau}
\tau(G) \geqslant \tau(G_1') + \tau(G_2').
\end{equation}

On the other hand, let $\CVCone$ be a connected vertex cover of $G_1'$ and $\CVCtwo$ be a connected vertex cover of $G_2'$.
We can assume that $\CVCone \subseteq V(G_1)$ and $\CVCtwo \subseteq V(G_2)$.
It is clear that $x \in \CVCone$ and $y \in \CVCtwo$.
Thus $\CVCone \cup \CVCtwo$ is a connected vertex cover of $G$.
Since $\CVCone \cap \CVCtwo = \emptyset$,
\begin{equation} \label{eqn tauc le tauc + tauc}
\tau_c(G) \leqslant \tau_c(G_1') + \tau_c(G_2').
\end{equation}

But (\ref{eqn tau ge tau + tau}) and (\ref{eqn tauc le tauc + tauc}) say that 
\begin{equation} \label{eqn PoC(G) le max PoC(G1),PoC(G2)}
\tau_c(G)/\tau(G) \leqslant \max \{\tau_c(G_1')/\tau(G_1') , \tau_c(G_2')/\tau(G_2')\}.
\end{equation}
Since both $G_1'$ and $G_2'$ are isomorphic to induced subgraphs of $G$, (\ref{eqn PoC(G) le max PoC(G1),PoC(G2)}) is a contradiction to the choice of $G$ to be critical.
\end{proof}

\begin{proof}[Theorem \ref{thm chordal critical}]
It is obvious that (ii) implies (iii). First, we prove that (iii) implies (i), that is, every critical chordal graph is a special tree. For this, let $G$ be a critical chordal graph. 

If the chordal graph $G$ is not a tree, then $G$ contains a triangle and every minimum vertex cover of $G$ contains at least two vertices of this triangle. Let $v$ be a vertex that is both in the triangle and in a minimum vertex cover. Then we have $\tau(G) = \tau(G-v)+1$ and also $\tau_c(G) \leqslant \tau_c(G-v)+1$, implying that $G$ is not critical. Therefore, $G$ is a tree.

Let $\VC$ be a minimum vertex cover of $G$.

First we show that $\VC$ is an independent set. Suppose there are $x, y \in \VC$ such that $xy \in E$. Since $G$ is a tree, $xy$ is a bridge, a contradiction with Lemma~\ref{no bridge}.

Now we show that every member of $V \setminus \VC$ has degree at most two.
For this, let $x \in V \setminus \VC$.
Suppose that $|N(x)| \geqslant 3$.
Let $X_1, X_2 , \ldots , X_k$ be the vertex sets of the connected components of $G-x$.
By assumption, $k \geqslant 3$.
Let 
$$H_1 = G - \bigcup_{i = 3}^k X_i$$ 
and
$$H_2 = G - (X_1 \cup X_2).$$

We observe that 
\begin{equation} \label{eqn tau ge tau + tau 2}
\tau(G) \geqslant \tau(H_1) + \tau(H_2).
\end{equation}

Since $x$ is a cutvertex of $H_1$, $x$ is contained in every connected vertex cover of $H_1$.
Therefore
\begin{equation} \label{eqn tau le tau + tau 2}
\tau_c(G) \leqslant \tau_c(H_1) + \tau_c(H_2).
\end{equation}

By the same argumentation from Lemma~\ref{no bridge}, (\ref{eqn tau ge tau + tau 2}) and (\ref{eqn tau le tau + tau 2}) yield a contradiction to the choice of $G$ to be critical.
This proves that every vertex of $V \setminus \VC$ has at most two neighbors.
By the discussion above, $\VC$ is an independent set and $G$ is a tree.
Moreover, the degree of every vertex in $\VC$ is at least two. Otherwise let $v$ be a vertex of $\VC$ with degree $1$ and let $u$ be the neighbor of $v$. Because $\VC$ is independent, $u \notin \VC$. Because $\VC$ is a vertex cover, every neighbor of $u$ is in $\VC$. Thus, $Y = (\VC \setminus \{v\}) \cup \{u\}$ is a minimum vertex cover but $Y$ is not independent, a contradiction.
We prove that $G$ is a special tree. In fact, the initial tree $H$ is defined as following :
$V(H) = \VC$ and $E(H) = \{uv| $there exists a path $P_{uv}$ in $V(G)\setminus \VC$ from $u$ to $v\}$. 
Because $\VC$ is a vertex cover of $G$, if $uv$ is an edge in $H$, then the length of the path $P_{uv}$ in $G$ is exactly $2$. Moreover, two $1$-degree vertices cannot have the same neighbor, because $G$ is critical. All in all, $G$ is a special tree.

Now, we show that (i) implies (ii), that is, every special tree is strongly critical.
Let $G$ be a special tree.
It is easy to see that $\tau_c(G)/\tau(G) =  2 - 1/\tau(G)$. If $G$ is not strongly critical, then there exists a proper subgraph $H$ of $G$ such that $\tau_c(H)/\tau(H) \geqslant \ \tau_c(G)/\tau(G).$ We can suppose that such an $H$ is minimal for inclusion. Thus $H$ is critical. By the previous argumentation, $H$ is a special tree. Therefore, $\tau_c(H) /\tau(H) = 2 - 1/\tau(H)$, but $2-1/\tau(G) > 2 - 1/\tau(H')$ for every proper special subtree $H'$ of $G$, a contradiction.
This completes the proof. 
\end{proof}

\subsubsection{PoC-Strongly-Critical Graphs}

Theorem \ref{thm strongly critical structure} follows from Lemma \ref{lem strongly critical => bipartite} and Lemma \ref{lem minimal with cutvertex => special} presented below. 

\begin{lemma} \label{lem strongly critical => bipartite}
Let $G$ be a strongly critical graph.
Then every minimum vertex cover of $G$ is an independent set.
In particular, $G$ is bipartite.
\end{lemma}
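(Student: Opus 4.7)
The plan is to assume, for contradiction, that some minimum vertex cover $C$ of $G$ contains an edge $xy$, and to contradict strong criticality by analyzing the proper subgraph $H := G - xy$ (and, in the delicate case, the further subgraphs $G-x$ and $G-\{x,y\}$). Write $n=\tau(G)$ and $n_c=\tau_c(G)$.

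I first observe that $\deg_G(x),\deg_G(y)\geq 2$: if, say, $x$ were a leaf, then $C\setminus\{x\}$ would be a smaller vertex cover since $y\in C$ covers $xy$. This is used to establish the main bound $\tau_c(H)\geq n_c-1$. Indeed, given any minimum connected vertex cover $D_c$ of $H$, either $D_c$ meets $\{x,y\}$ (and is then itself a connected vertex cover of $G$), or it avoids both, in which case $D_c$ must contain every $H$-neighbor of $x$; since $\deg_G(x)\geq 2$ at least one such neighbor exists in $D_c$, so $D_c\cup\{x\}$ is a connected vertex cover of $G$. The bound $\tau(H)\geq n-1$ is obvious.

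\textbf{Case 1: $\tau(H)=n-1$.} A direct calculation gives
\[
\PoC(H)\ \geq\ \frac{n_c-1}{n-1}\ \geq\ \frac{n_c}{n}\ =\ \PoC(G),
\]
the last inequality being equivalent to $n_c\geq n$, which always holds. This contradicts strong criticality.

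\textbf{Case 2: $\tau(H)=n$.} Strong criticality gives $\tau_c(H)<n_c$, and combined with $\tau_c(H)\geq n_c-1$ this forces $\tau_c(H)=n_c-1$; moreover every minimum connected vertex cover $D_c$ of $H$ must avoid $\{x,y\}$, since otherwise it would be a connected vertex cover of $G$ of size $n_c-1$. Such a $D_c$ necessarily contains $N_G(x)\setminus\{y\}$ and $N_G(y)\setminus\{x\}$.

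To close this sub-case, I iterate the argument on $G-x$. Since $\tau(G-x)=n-1$ and $D_c$ is a connected vertex cover of $G-x$ of size $n_c-1$, exactly the Case~1 calculation applied to $G-x$ in place of $H$ yields $\tau_c(G-x)\leq n_c-2$. A minimum connected vertex cover $E_c$ of $G-x$ must then avoid all of $N_G(x)$, because otherwise $E_c\cup\{x\}$ would be a connected vertex cover of $G$ of size at most $n_c-1$. In particular $y\notin E_c$, so $N_G(y)\setminus\{x\}\subseteq E_c$. A second iteration on the proper subgraph $G-\{x,y\}$ yields a minimum connected vertex cover $F_c$ of size $\leq n_c-3$ that avoids $N_G(x)\cup N_G(y)$ entirely.

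The final contradiction comes from reassembly: $F_c\cup\{x,y\}$ is a vertex cover of $G$ of size $n_c-1$, disconnected only into the two components $F_c$ and $\{x,y\}$; using the avoidance property together with $\deg_G(x),\deg_G(y)\geq 2$, one finds a single bridging vertex $v$ whose addition produces a connected vertex cover of $G$ smaller than $n_c$, contradicting the minimality of $\tau_c(G)$. The main technical obstacle is executing this reassembly step cleanly --- making precise which bridging vertex to choose and tracking that the resulting set indeed beats $n_c$, as the avoidance constraints interact subtly with the neighborhood structure around $x$ and $y$.

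The bipartiteness statement then follows immediately: if every minimum vertex cover $C$ is independent, then so is $V(G)\setminus C$ (trivially, as $C$ is a vertex cover), so $(C,V(G)\setminus C)$ is a bipartition of $G$.
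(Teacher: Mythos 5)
Your opening (deleting the edge $xy$, proving $\tau_c(G-xy)\geq n_c-1$, and splitting on whether $\tau(G-xy)$ drops) runs parallel to the paper's own proof, and Case~1 together with the first half of Case~2 is essentially right. But the proof does not close: the ``reassembly'' step that you yourself flag as the main technical obstacle is a genuine gap, and the natural way to execute it fails arithmetically. You only know $|F_c|\leq n_c-3$, so $F_c\cup\{x,y\}$ has size at most $n_c-1$ and, since $F_c$ avoids $N_G(x)\cup N_G(y)$, its two components are non-adjacent; any bridging vertex $v$ is therefore a \emph{new} vertex lying outside the cover, and $F_c\cup\{x,y,v\}$ has size at most $n_c$ --- which does not beat $\tau_c(G)=n_c$ and yields no contradiction. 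Nothing in your argument supplies the missing unit of slack (an upper bound $|F_c|\leq n_c-4$, or a reconnection that reuses a vertex already counted), so a new idea is required at exactly this point. The paper's proof gets that extra unit by deleting \emph{edges} rather than vertices: it removes only the edges from $y$ to $B=N_G(y)\setminus C$, which keeps $y$ attached to $A=N_G(y)\cap C\ni x$ while forcing $\tau$ to drop by at least $1$; strong criticality together with $\tau_c(G)/\tau(G)>1$ then forces $\tau_c$ to drop by at least $2$, and re-adding the \emph{single} vertex $y$ restores both coverage and connectivity at once, producing a connected vertex cover of size at most $\tau_c(G)-1$.

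A secondary issue: your lower bound $\tau_c(G-xy)\geq n_c-1$, and the analogous extension arguments for $G-x$ and $G-\{x,y\}$, silently assume these subgraphs are connected. If $xy$ is a bridge (or if $x$ or $y$ is a cutvertex), the minimum connected vertex cover of the subgraph splits into several components and need not extend to a connected vertex cover of $G$ by adding one vertex, so the inequalities you rely on are not justified as stated. The paper disposes of the bridge case separately (Lemma~\ref{no bridge}, which shows a critical graph has no bridge with both endvertices in a minimum vertex cover), and you would need that lemma, or an explicit treatment of the disconnected cases, before even the correct parts of your argument are airtight.
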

\begin{proof}
Let $G$ be a strongly critical graph and let $\VC$ be a minimum vertex cover of $G$.
Suppose that $\VC$ is not an independent set.
Thus there are two adjacent vertices in $\VC$, say $x$ and $y$.

By Lemma~\ref{no bridge}, $xy$ cannot be a bridge of $G$.
So $G-xy$ is connected.
Let $\CVC$ be a minimum connected vertex cover of $G - xy$.
Suppose that $\{x,y\} \cap \CVC \neq \emptyset$.
Then $\tau_c(G-xy) = \tau_c(G)$, in contradiction to the choice of $G$ to be strongly critical.
Thus $\{x,y\} \cap \CVC = \emptyset$.
Hence $\CVC \cup \{x\}$ is a minimum connected vertex cover of $G$ and, moreover, $y \notin \CVC \cup \{x\}$.

Let $A = N_G(y) \cap \VC$ and $B = N_G(y) \setminus \VC$.
As $x \in A$, $A \neq \emptyset$.
Since $\VC$ is a minimum vertex cover, $B \neq \emptyset$.
Let $G'$ be the graph obtained from $G$ by the removal of all edges joining $y$ to $B$.
Since $\CVC \cup \{x\}$ is a connected vertex cover of $G$ and $y \notin \CVC$, then $A \cup B \subseteq \CVC \cup \{x\}$ and $G'$ is connected.
As $\VC \setminus \{y\}$ is a vertex cover of $G'$, $\tau(G') < \tau(G)$.
Thus, by the choice of $G$, $\tau_c(G') \leqslant \tau_c(G) - 2$.
Let $\CVC'$ be a minimum vertex cover of $G'$.
Then $A \cap \CVC' \neq \emptyset$.
Therefore $\CVC' \cup \{y\}$ is a connected vertex cover of $G$, in contradiction to the fact that $|\CVC' \cup \{y\}| \leqslant \tau_c(G')+1 \leqslant \tau_c(G) - 1$.
This completes the proof. 
\end{proof}

\begin{lemma} \label{lem minimal with cutvertex => special}
Let $G$ be a strongly critical graph.
If $G$ has a cutvertex, it is a special tree.
\end{lemma}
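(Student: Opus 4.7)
The overall strategy is to first show that $G$ is a tree, and then invoke Theorem~\ref{thm chordal critical}: since strong criticality implies criticality and every tree is chordal, once $G$ is known to be a tree it is a critical chordal graph, so by the implication (iii)$\Rightarrow$(i) of Theorem~\ref{thm chordal critical}, $G$ is a special tree.

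To show $G$ is a tree, I argue by contradiction and assume $G$ contains a cycle. By Lemma~\ref{lem strongly critical => bipartite}, $G$ is bipartite, so the cycle has even length at least $4$. Writing $A_1,\ldots,A_k$ (with $k\geq 2$) for the vertex sets of the components of $G-v$ and setting $H_i:=G[A_i\cup\{v\}]$, the fact that $v$ is a cutvertex forces every cycle of $G$ to lie inside a single $H_i$; without loss of generality the cycle lies in $H_1$.

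The key preliminary is the following: for every edge $e$ of $G$ lying on a cycle, $\tau(G-e)=\tau(G)$ and $\tau_c(G-e)=\tau_c(G)-1$. Indeed, $G-e$ is a proper connected subgraph, so strong criticality gives $\tau_c(G-e)/\tau(G-e)<\tau_c(G)/\tau(G)$. Combined with the elementary bounds $\tau(G-e)\in\{\tau(G)-1,\tau(G)\}$ and $\tau_c(G-e)\geq\tau_c(G)-1$ (the latter by adding one endpoint of $e$ to a minimum connected vertex cover of $G-e$, using that the other neighbors of that endpoint already lie in the cover), the case $\tau(G-e)=\tau(G)-1$ would entail $\tau_c(G)<\tau(G)$, which is impossible; the remaining bounds then force the claimed equalities.

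From here I would extract the contradiction via bookkeeping at the cutvertex $v$. Since $v$ has neighbors in each $A_i$, any connected vertex cover of $G$ or of $G-e$ must contain $v$. Letting $\tau_c^v(\cdot)$ denote the minimum size of a connected vertex cover containing $v$, the cutvertex decomposition yields
\begin{equation}
\tau_c(G)=1+\sum_{i=1}^{k}\bigl(\tau_c^v(H_i)-1\bigr),
\end{equation}
and an analogous identity for $G-e$ with $H_1$ replaced by $H_1-e$. Subtracting, $\tau_c(G-e)=\tau_c(G)-1$ forces $\tau_c^v(H_1-e)=\tau_c^v(H_1)-1$. Combined with the criticality inequality $\tau_c(H_1)/\tau(H_1)<\tau_c(G)/\tau(G)$, which holds because $H_1$ is a proper induced subgraph of $G$, this should yield the desired contradiction through a careful comparison of minimum connected vertex covers of $H_1$ and $H_1-e$ containing $v$.

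The main obstacle I anticipate is this last step: translating the identity $\tau_c^v(H_1-e)=\tau_c^v(H_1)-1$, together with a parallel analysis for the ordinary $\tau$, into a contradiction with the criticality bound on $H_1$. I expect this to require a case analysis according to whether $v$ lies on the cycle through $e$ and whether $v$ belongs to some minimum (connected) vertex cover of $H_1$; the case $k=2$ may need separate treatment, since then $v$'s role as cutvertex in $H_1-e$ (or lack thereof) becomes delicate.
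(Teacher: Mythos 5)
There is a genuine gap: your proof stops exactly where the real work begins. The reduction to ``$G$ is a tree'' plus Theorem~\ref{thm chordal critical} is the right frame, and your preliminary claim that $\tau(G-e)=\tau(G)$ and $\tau_c(G-e)=\tau_c(G)-1$ for every edge $e$ on a cycle is correct and correctly justified. But the concluding step --- turning the identity $\tau_c^v(H_1-e)=\tau_c^v(H_1)-1$ and the criticality bound on $H_1$ into a contradiction --- is only described as something that ``should'' work after ``careful comparison'' and ``case analysis''; no contradiction is actually derived, and it is not clear that this bookkeeping route closes at all. Note also that the inequality $\tau_c(H_1)/\tau(H_1)<\tau_c(G)/\tau(G)$ you want to invoke requires $H_1$ to be a \emph{proper} subgraph, which fails when $k$ would equal $1$; more importantly, nothing in the proposed bookkeeping visibly contradicts anything.

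The missing idea is to choose the edge $e$ more carefully, and then the contradiction is immediate --- in fact it collides head-on with your own preliminary claim. Since $G$ is not a tree it has a non-trivial (i.e.\ $2$-connected) block $B$, and since $G$ has a cutvertex it has at least two blocks, so $B$ contains some cutvertex $x$ of $G$. Pick $e$ an edge of $B$ incident to $x$. Then $e$ lies on a cycle, so $G-e$ is connected and $x$ is still a cutvertex of $G-e$ (indeed $(G-e)-x=G-x$). Every connected vertex cover of a connected graph contains all of its cutvertices, so every connected vertex cover of $G-e$ contains $x$ and therefore already covers $e$; hence it is a connected vertex cover of $G$, giving $\tau_c(G-e)\geqslant\tau_c(G)$. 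Together with $\tau(G-e)\leqslant\tau(G)$ this contradicts strong criticality directly (and contradicts your computation $\tau_c(G-e)=\tau_c(G)-1$). This is the paper's argument; your decomposition into the graphs $H_i$ and the identity for $\tau_c^v$ are not needed, and Lemma~\ref{lem strongly critical => bipartite} plays no role here.
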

\begin{proof}
Let $G=(V,E)$ be a strongly critical graph with a cutvertex.
Suppose that $G$ is not a tree.
Thus $G$ has a non-trivial block.
We can pick a cutvertex $x$ and an edge $e$ incident to $x$ in this block.
The graph $G-e$ is connected, by the choice of $e$.
Every connected vertex cover of $G-e$ contains $x$, as $x$ is a cutvertex of $G-e$.
Hence, every connected vertex cover of $G-e$ covers $e$.
Thus $\tau_c(G-e) \geqslant \tau_c(G)$, in contradiction to the choice of $G$ to be strongly critical.
Hence, $G$ is a tree. In particular, $G$ is chordal.

The conclusion then follows from Theorem~\ref{thm chordal critical}. 
\end{proof}


\end{document}